\documentclass[a4paper]{article}
\usepackage{lmodern}
\pdfoutput=1
\usepackage{graphicx,wrapfig}
\usepackage{lipsum}
\usepackage{amsmath,amssymb,mathrsfs}
\usepackage{hyperref}
\usepackage{url}
\usepackage{mathtools}
\usepackage{changepage}
\usepackage[ruled,vlined]{algorithm2e}
\usepackage{amsmath}
\usepackage{multirow}
\usepackage{wrapfig}
\usepackage{subfig}
\usepackage{color}
\usepackage{epsfig,enumerate,amsmath,amsfonts,amssymb,amsthm,mathrsfs,ifpdf}
\usepackage{indentfirst,relsize}
\usepackage{setspace,graphicx}
\usepackage{latexsym}
\usepackage[all]{xy}
\usepackage[dvipsnames]{pstricks}
\usepackage{pst-grad} 
\usepackage{pst-plot} 
\usepackage[margin = 2.50cm]{geometry}

\usepackage{authblk} 
\usepackage{algpseudocode}
\usepackage{color,soul}
\usepackage{enumitem}

\usepackage{orcidlink}
\usepackage{tabularx}
\usepackage{pst-grad} 
\usepackage{pst-plot} 
\usepackage{enumitem}

\usepackage{orcidlink}

\usepackage{tikz}
\usetikzlibrary{arrows.meta,positioning}

\newcommand{\remove}[1]{}

\newtheorem{theorem}{Theorem}
\newtheorem{lemma}[theorem]{Lemma}

\newtheorem{corollary}[theorem]{Corollary}

\newtheorem{observation}[theorem]{Observation}

\newcommand{\DP}{\textsc{DP}}

\usepackage{todonotes}
\usepackage[most]{tcolorbox}

\title{Maximum Edge Open Packing in Permutation, Interval, and
	Well-Partitioned Chordal Graphs}

\author[1]{Gautam K. Das\footnote{gkd@iitg.ac.in}}
\author[1]{Kamal Santra \orcidlink{0009-0006-5997-1452} \footnote{kamal.7.2013@gmail.com, kamal.santra@iitg.ac.in}}
\affil[1]{Department of Mathematics\\
	
	Indian Institute of Technology Guwahati\\
	
	Guwahati, 781039, Assam, India}

\date{}

\begin{document}

	\maketitle
\begin{abstract}
Edge open packing is a relaxation of induced matching in which the selected
edges may induce disjoint stars. We study the \textsc{Maximum Edge Open
	Packing} problem on permutation graphs, interval graphs, and well-partitioned
chordal graphs. For the first two classes, we introduce an oriented
star-conflict graph whose vertices are ordered edges. We prove that its
compatibility graph admits a natural transitive orientation: a product-order
orientation for permutation graphs and a left-to-right orientation for
interval graphs. In each case, a maximum edge open packing is obtained from a
maximum clique, equivalently a longest directed path, in the compatibility
graph. Given the corresponding representation, both algorithms run in
\(O(n^2+m^2)\leq O(n^4)\) time, where \(n=|V(G)|\) and \(m=|E(G)|\). For
well-partitioned chordal graphs, we give a
dynamic program over a partition tree. Its states use the fact that the
endpoint set of an edge open packing meets each clique bag in at most two
vertices. Given a partition-tree representation, the edge open packing number
is computed in \(O(n^4)\) time, and an optimal
packing can be reconstructed within the same time bound.
\end{abstract}
	
	{\bf Keywords.}
Edge open packing; Induced matching; Permutation graph; Well-partitioned chordal graph; Interval
graph; Comparability graph; 
Dynamic programming


\section{Introduction}
\label{sec:introduction}

Packing problems are central objects in graph algorithms and combinatorial
optimization. In many applications, the aim is to select a collection of
vertices or edges whose members do not interfere with one another. Classical
examples include matchings, induced matchings, open packings, and several
packing variants of graph coloring. Among these, the induced matching problem
asks for a set of edges whose endpoints induce a matching. This strong
separation condition makes the problem algorithmically difficult: finding a
maximum induced matching is NP-hard in general and remains hard on several
restricted graph classes
\cite{cameron1989induced,dabrowski2013new,lozin2002maximum,stockmeyer1982np}.

Edge open packing, introduced by Chelladurai et
al.~\cite{chelladurai2022edge}, is a natural relaxation of induced matching.
An edge set is an edge open packing if no two of its edges have a third edge
joining an endpoint of one to an endpoint of the other. Equivalently, the
subgraph induced by the endpoints of the selected edges is a disjoint union of
induced stars. Thus, every induced matching is an edge open packing, whereas an
edge open packing may contain several edges that share the centre of the same
induced star.

The algorithmic study of edge open packing was initiated by Bre{\v{s}}ar and
Samadi~\cite{brevsar2024edge}. They proved that the decision problem is
NP-complete even for graphs with a universal vertex, Eulerian bipartite graphs,
and planar graphs of maximum degree four, and they gave a linear-time algorithm
for trees. Bre{\v{s}}ar et al.~\cite{BRESAR2025} subsequently investigated the
relationship between induced matchings and edge open packings, particularly on
trees and graph products. Pandey and Santra~\cite{pandey2025edge} obtained
further structural characterizations, while Santra~\cite{eopblock} developed
an \(O(n^2)\)-time algorithm for proper interval graphs, a linear-time
algorithm for block graphs, and an \(O(n^3)\)-time algorithm for split graphs.
More recently, Bhyravarapu et al.~\cite{BhyravarapuDasSantra2026} gave polynomial-time algorithms for
distance-hereditary and biconvex bipartite graphs. They also gave an
\(2^{O(\omega(G))}\operatorname{poly}(n)\)-time algorithm for chordal graphs,
thereby showing that the problem is fixed-parameter tractable when parameterized
by the clique number.

Bre{\v{s}}ar and Samadi~\cite{brevsar2024edge} asked whether the edge open
packing number can be computed in polynomial time on chordal graphs. The
fixed-parameter algorithm mentioned above does not settle this question when
the clique number is unbounded. Interval graphs form a natural superclass of
proper interval graphs, and well-partitioned chordal graphs form a natural
superclass of split graphs. Since both are subclasses of chordal graphs,
polynomial-time algorithms for these classes provide further partial progress
toward resolving the question.

In this paper, we study the \textsc{Maximum Edge Open Packing} problem on
permutation graphs, interval graphs, and well-partitioned chordal graphs. For
the first two classes, we use an auxiliary graph built from ordered edges. For
a graph \(G\), the oriented star-conflict graph \(A_G\) has one vertex for each
ordered edge \((c,\ell)\), interpreted as selecting the edge \(c\ell\) with
\(c\) prescribed as a star centre and \(\ell\) prescribed as a leaf. Two
vertices of \(A_G\) are adjacent precisely when their ordered edges cannot
occur together with these prescribed roles. Forgetting the orientations in an
independent set of \(A_G\) produces an edge open packing of the same size.
Conversely, after choosing a centre in every induced-star component, every edge
open packing can be oriented to give an independent set of \(A_G\). It follows
that \(\rho_e^o(G)=\alpha(A_G)\).

Our first contribution concerns permutation graphs. We prove that, when \(G\)
is a permutation graph, the compatibility graph
\(H_G=\overline{A_G}\) admits a natural transitive orientation. Hence,
\(H_G\) is a comparability graph, and a maximum clique of \(H_G\) can be found
as a longest directed path in this orientation. Given a permutation
representation of \(G\), this yields an
\(O(n^2+m^2)\le O(n^4)\)-time algorithm.

Our second contribution applies the same ordered-edge framework to interval
graphs. We show that, for an interval graph \(G\), the compatibility graph
\(H_G\) admits a transitive left-to-right orientation derived from an interval
representation. Consequently, \(\rho_e^o(G)\) is again obtained from a maximum
clique in a comparability graph. Given an interval representation of \(G\), the
resulting algorithm runs in \(O(n^2+m^2)\le O(n^4)\) time. Since interval
graphs properly generalize proper interval graphs, this extends the previously
known polynomial-time result to a larger graph class.

Our third contribution concerns well-partitioned chordal graphs, a subclass of
chordal graphs introduced by Ahn et al.~\cite{AhnJaffkeKwonLima2022}. Such a
graph admits a partition of its vertex set into clique bags whose interactions
are described by a tree. We exploit this partition tree in a dynamic program.
The key structural fact is that, for every edge open packing \(D\) and every
clique bag \(X\), at most two vertices of \(X\) belong to \(V(D)\). This gives
only quadratically many boundary states per bag. When a partition-tree
representation is supplied, the algorithm computes the optimum value in
\(O(n^4)\) time. Since every split graph is
well-partitioned chordal, this extends the known tractability result for split
graphs to a broader chordal subclass.

The remainder of the paper is organized as follows.
Section~\ref{sec:preliminaries} introduces the notation and basic definitions.
Section~\ref{sec:permutation} treats permutation graphs using the oriented
star-conflict graph, and Section~\ref{sec:interval} develops the corresponding
algorithm for interval graphs. Section~\ref{sec:wpcg} presents the dynamic
program for well-partitioned chordal graphs. Finally,
Section~\ref{sec:conclusion} summarizes the results and discusses directions
for further research.

\section{Preliminaries}
\label{sec:preliminaries}

All graphs considered in this paper are finite, simple, and undirected. For a
graph \(G\), we write \(V(G)\) and \(E(G)\) for its vertex set and edge set,
respectively, and put \(n=|V(G)|\) and \(m=|E(G)|\). The independence number
and clique number of \(G\) are denoted by \(\alpha(G)\) and \(\omega(G)\).
For a vertex \(v\in V(G)\), the open and closed neighbourhoods of \(v\) are
denoted by \(N_G(v)\) and \(N_G[v]\), respectively. For a set
\(X\subseteq V(G)\), let \(G[X]\) denote the subgraph of \(G\) induced by
\(X\). If \(X,Y\subseteq V(G)\) are disjoint, then
\(E_G(X,Y)=\{xy\in E(G):x\in X,\ y\in Y\}\) is the set of edges with one
endpoint in \(X\) and the other in \(Y\). A \emph{star} is a graph
isomorphic to \(K_{1,t}\) for some \(t\ge 1\). For \(t\geq 2\), its unique
vertex of degree \(t\) is called the centre, and its other vertices are called
leaves. When \(t=1\), either endpoint may be chosen as the centre.

Let \(D\subseteq E(G)\). We write \(V(D)\) for the set of vertices incident
with the edges of \(D\). Thus, \(G[V(D)]\) is the subgraph induced by all
endpoints of the edges in \(D\). Two distinct edges \(e_1,e_2\in E(G)\) have a
\emph{common edge} if there is an edge
\(e\in E(G)\setminus\{e_1,e_2\}\) joining an endpoint of \(e_1\) to an
endpoint of \(e_2\). An edge set \(D\subseteq E(G)\) is an \emph{edge open
packing} if no two edges of \(D\) have a common edge. The maximum cardinality
of an edge open packing of \(G\) is the \emph{edge open packing number} of
\(G\), denoted by \(\rho_e^o(G)\).

We use the following characterization throughout the paper.

\begin{observation}\label{obs:eop-stars}
	An edge set \(D\subseteq E(G)\) is an edge open packing if and only if
	\(G[V(D)]\) is a disjoint union of induced stars.
\end{observation}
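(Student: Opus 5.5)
Both directions go through the definition of a common edge, applied componentwise to \(G[V(D)]\). Write \(H=G[V(D)]\). The plan is to first record a reformulation: two distinct edges \(e_1,e_2\in D\) have a common edge exactly when \(H\) contains an edge \(e\notin\{e_1,e_2\}\) with one endpoint in \(e_1\) and the other in \(e_2\). This holds because any such \(e\) joins two vertices of \(V(D)\), so it lies in \(H\). After that, both implications are statements purely about \(H\) and the edges of \(D\) inside it.

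For the forward direction, assume \(D\) is an edge open packing, and let \(K\) be a connected component of \(H\). Every vertex of \(K\) is an endpoint of some edge of \(D\), and that edge lies in \(K\).

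First, \(K\) has no edge outside \(D\). Take \(e=uv\in E(K)\setminus D\) and choose \(e_1,e_2\in D\) with \(u\in e_1\) and \(v\in e_2\). If \(e_1\neq e_2\), then \(e\) is a common edge of \(e_1\) and \(e_2\). If \(e_1=e_2\), then \(e_1=uv=e\), which contradicts \(e\notin D\). Hence \(E(K)\subseteq D\).

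Next, \(K\) contains no path with three edges, and also no triangle. Suppose \(K\) contains a path \(abcd\) or a triangle \(abc\). In either case \(ab,bc,cd\) (respectively \(ab,bc,ca\)) are distinct edges of \(D\). The edge \(bc\) joins an endpoint of \(ab\) to an endpoint of \(cd\) (respectively \(ca\)), and \(bc\) is different from both. This contradicts the packing property.

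A connected graph with at least one edge that has no triangle and no path with three edges is a star \(K_{1,t}\). Since \(K\) is a component of an induced subgraph, it is an induced star. So \(H\) is a disjoint union of induced stars.

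For the converse, assume \(H\) is a disjoint union of induced stars. Let \(e_1,e_2\in D\) be distinct, and suppose \(e\in E(G)\setminus\{e_1,e_2\}\) joins an endpoint of \(e_1\) to an endpoint of \(e_2\). By the reformulation, \(e\in E(H)\), so \(e_1,e_2,e\) all lie in one star component.

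In a star every edge contains the centre. So \(e_1\), \(e_2\) and \(e\) are three distinct edges \(cx_1\), \(cx_2\), \(cy\) through the centre \(c\), with \(x_1,x_2,y\) distinct leaves. We need \(e\) to contain an endpoint of \(e_1\) and an endpoint of \(e_2\). Its only endpoint shared with \(e_1\) is \(c\), and its only endpoint shared with \(e_2\) is also \(c\). So \(e\) would have to use the single vertex \(c\) in both roles, that is, it would have to join \(c\) to \(c\), which is impossible. This also relies on leaves of an induced star being pairwise nonadjacent.

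The main subtlety is the convention on what "joining an endpoint of \(e_1\) to an endpoint of \(e_2\)" means. I read it as \(e=uv\) with \(u\in e_1\), \(v\in e_2\), and \(u\neq v\) automatically, since \(e\) is an edge. Under this reading, edges sharing the centre never have a common edge, and the proof above goes through.
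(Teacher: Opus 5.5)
Your proof is correct and follows the paper's argument. You first show \(E(G[V(D)])=D\) and then rule out the two forbidden local configurations; the paper phrases this as ``no edge with both endpoints of degree at least two,'' and its cases \(x\neq y\) and \(x=y\) are exactly your three-edge path and triangle. For the converse, both you and the paper observe that two packing edges in the same induced star meet only at its centre.

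The one step you leave implicit, that a connected graph with an edge but no triangle and no three-edge path is a star, is standard. Such a graph is acyclic, because any longer cycle contains a three-edge path, so it is a tree of diameter at most two. The paper's degree-based formulation avoids needing this fact.
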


\begin{proof}
	Suppose first that \(D\) is an edge open packing, and put
	\(H=G[V(D)]\). We claim that \(E(H)=D\). The inclusion
	\(D\subseteq E(H)\) is immediate. If some edge
	\(xy\in E(H)\setminus D\) existed, then there would be
	edges \(e_x,e_y\in D\) incident with \(x\) and \(y\), respectively. These
	two selected edges are distinct: if \(e_x=e_y\), then this edge has endpoints
	\(x\) and \(y\), and hence it is the edge \(xy\), contrary to
	\(xy\notin D\). The edge \(xy\) would therefore be a common edge of
	\(e_x\) and \(e_y\), a contradiction. This proves the claim.

	Let \(C\) be a connected component of \(G[V(D)]\). No edge of \(C\) can
	have both endpoints of degree at least two in \(C\). Indeed, suppose that
	\(uv\in E(C)\) and that both \(u\) and \(v\) have degree at least two.
	Choose edges \(ux\) and \(vy\) of \(C\), where \(x\ne v\) and
	\(y\ne u\). By the preceding claim, all three edges belong to \(D\). If
	\(x\ne y\), then \(uv\) is a common edge of the distinct selected edges
	\(ux\) and \(vy\). If \(x=y\), then \(uv\) is a common edge of the
	distinct selected edges \(ux\) and \(vx\). Both cases contradict the choice
	of \(D\).

	If \(C\) consists of one edge, then it is \(K_{1,1}\). Otherwise, \(C\)
	has a vertex \(c\) of degree at least two. Every neighbour of \(c\) has
	degree one by the preceding paragraph. Since \(C\) is connected, it follows
	that every vertex different from \(c\) is a neighbour of \(c\). Hence,
	\(C\) is a star. As \(G[V(D)]\) is an induced subgraph of \(G\), its
	components are induced stars. Thus, \(G[V(D)]\) is a disjoint union of
	induced stars.

	Conversely, suppose that \(G[V(D)]\) is a disjoint union of induced stars,
	and let \(e_1,e_2\in D\) be distinct. If they lie in different components,
	then no edge of \(G\) joins an endpoint of \(e_1\) to an endpoint of
	\(e_2\), since such an edge would also belong to \(G[V(D)]\). If they lie
	in the same star component, then they share its centre, and the only edges
	between their endpoint sets are \(e_1\) and \(e_2\) themselves. Therefore,
	\(e_1\) and \(e_2\) have no common edge. Since the pair was arbitrary,
	\(D\) is an edge open packing.
\end{proof}

A graph is a \emph{permutation graph} if its vertices admit two linear orders
\(<_1\) and \(<_2\) such that two vertices are adjacent precisely when they
occur in opposite orders in \(<_1\) and \(<_2\). Equivalently, permutation
graphs are the incomparability graphs of two-dimensional partial orders. A
graph is an \emph{interval graph} if its vertices can be represented by closed
intervals on the real line such that two vertices are adjacent precisely when
their intervals intersect.

An orientation of a graph is \emph{transitive} if \(x\to y\) and \(y\to z\)
imply that \(xz\) is an edge oriented as \(x\to z\). A graph is a
\emph{comparability graph} if it admits a transitive orientation, and it is a
\emph{cocomparability graph} if its complement is a comparability graph. We
use standard facts about these graph classes; see
Golumbic~\cite{Golumbic1980}.

A graph is \emph{chordal} if every cycle of length at least four has a chord.
Chordal graphs admit clique trees: the maximal cliques can be arranged as the
nodes of a tree such that, for every vertex \(v\), the maximal cliques
containing \(v\) induce a connected subtree
\cite{FulkersonGross1965,Golumbic1980}.

A connected graph is a \emph{well-partitioned chordal graph} if its vertex set
admits a partition \(\mathcal P\) into cliques together with a tree \(T\) whose
nodes are the bags in \(\mathcal P\), subject to the following conditions. For
every two distinct bags \(X,Y\), if \(XY\in E(T)\), there are sets
\(\operatorname{bd}(X,Y)\subseteq X\) and
\(\operatorname{bd}(Y,X)\subseteq Y\) such that
\(E_G(X,Y)=\operatorname{bd}(X,Y)\times
\operatorname{bd}(Y,X)\); if \(XY\notin E(T)\), then
\(E_G(X,Y)=\emptyset\). The pair
\((\mathcal P,T)\) is called a \emph{partition-tree representation} of
\(G\). This graph class was introduced by Ahn et
al.~\cite{AhnJaffkeKwonLima2022} and contains all split graphs. We adopt the
standard componentwise convention for disconnected graphs: a disconnected
graph is well-partitioned chordal if each of its connected components admits a
partition-tree representation. The disjoint union of the corresponding trees
is then a partition forest for the graph.

\section{Permutation Graphs}
\label{sec:permutation}

In this section, we show that the
\textnormal{\textsc{Maximum Edge Open Packing}} problem
can be solved in polynomial time on permutation graphs. The proof uses
the oriented star-conflict graph. The key point is that, when \(G\) is a
permutation graph, the complement \(H_G=\overline{A_G}\) admits a natural
transitive orientation. Hence, \(H_G\) is a comparability graph, and the desired
solution can be found as a maximum clique of \(H_G\).

\subsection{The oriented star-conflict graph}

Let \(G\) be a graph. We define the \emph{oriented star-conflict graph} of
\(G\), denoted by \(A_G\), as follows. The vertices of \(A_G\) are the ordered
edges of \(G\); that is, \(V(A_G)=\{(c,\ell):c\ell\in E(G)\}\). We interpret
an ordered edge \((c,\ell)\) as selecting the edge \(c\ell\), where \(c\) is
viewed as the centre of a star and \(\ell\) is viewed as a leaf.

Two ordered edges \((c,\ell)\) and \((d,r)\) are called compatible if they can
appear together in one edge open packing with the prescribed orientations. This
happens in exactly two cases. First, they may have the same centre: \(c=d\),
\(\ell\neq r\), and \(\ell r\notin E(G)\). In this case, the two selected
edges belong to the same induced star centred at \(c\). Second, they may have
different centres. In that case, the four vertices \(c,\ell,d,r\) must be
distinct, and there must be no edge between the endpoint sets \(\{c,\ell\}\)
and \(\{d,r\}\). Equivalently, we require \(cd,cr,\ell d,\ell r\notin E(G)\).
In this case, the two selected edges belong to two different star components.

In particular, if \(c\neq d\), then the ordered edges \((c,\ell)\) and
\((d,\ell)\) are not compatible, because they share the same leaf. They do not
satisfy the same-centre condition, and they also fail the different-centre
condition since the four endpoints are not distinct.

The graph \(A_G\) records conflicts between ordered edges. Thus, two vertices
of \(A_G\) are adjacent if and only if the corresponding ordered edges are not
compatible.

We now relate edge open packings of \(G\) to independent sets of \(A_G\).
The vertices of \(A_G\) represent edges of \(G\) together with prescribed
centre--leaf roles, and two vertices of \(A_G\) are adjacent exactly when the
corresponding ordered edges are incompatible with these roles. Thus, choosing
pairwise compatible ordered edges is the same as choosing pairwise nonadjacent
vertices of \(A_G\).

\begin{lemma}\label{lem:eop-independent-ag}
	For every graph \(G\), \(\rho_e^o(G)=\alpha(A_G)\).
\end{lemma}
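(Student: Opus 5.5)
We prove the two inequalities \(\alpha(A_G)\le\rho_e^o(G)\) and \(\rho_e^o(G)\le\alpha(A_G)\) separately, using Observation~\ref{obs:eop-stars} as the bridge between edge open packings and induced stars.

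\emph{From independent sets to packings.} Let \(I\) be an independent set of \(A_G\), so its ordered edges are pairwise compatible. First we check that forgetting orientations is injective on \(I\). If both \((c,\ell)\) and \((\ell,c)\) lay in \(I\), they would have different centres but share endpoints, so they could not be compatible. Hence the set \(D=\{c\ell:(c,\ell)\in I\}\) satisfies \(|D|=|I|\).

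Next we show directly that no two edges of \(D\) have a common edge. Take distinct \(c\ell, dr\in D\) coming from \((c,\ell),(d,r)\in I\).
\begin{itemize}
\item If \(c=d\), then \(\ell\ne r\) and \(\ell r\notin E(G)\). The only possible edges between \(\{c,\ell\}\) and \(\{c,r\}\) are then \(c\ell\) and \(cr\) themselves, so there is no common edge.
\item If \(c\ne d\), then the four endpoints are distinct and \(cd,cr,\ell d,\ell r\notin E(G)\). So no edge at all joins the two endpoint sets.
\end{itemize}
Thus \(D\) is an edge open packing, and \(\alpha(A_G)\le\rho_e^o(G)\).

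\emph{From packings to independent sets.} Let \(D\) be a maximum edge open packing. By Observation~\ref{obs:eop-stars}, together with the claim \(E(G[V(D)])=D\) established in its proof, \(G[V(D)]\) is a disjoint union of induced stars whose edge set is exactly \(D\). In each component we fix a centre: the unique vertex of degree at least two, or an arbitrary endpoint if the component is \(K_{1,1}\). We then orient every edge of \(D\) as \((c,\ell)\) with \(c\) the chosen centre of its component. This gives a set \(I\) of \(|D|\) distinct ordered edges.

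We check that any two members \((c,\ell),(d,r)\) of \(I\) are compatible.
\begin{itemize}
\item If they lie in the same component, then \(c=d\) and \(\ell\ne r\). Since the star is induced, \(\ell r\notin E(G)\).
\item If they lie in different components, then the four vertices are distinct. Any edge of \(G\) between \(\{c,\ell\}\) and \(\{d,r\}\) would have both ends in \(V(D)\), hence lie in \(G[V(D)]\), and would join two components, which is impossible. So \(cd,cr,\ell d,\ell r\notin E(G)\).
\end{itemize}
Therefore \(I\) is independent in \(A_G\), and \(\rho_e^o(G)\le\alpha(A_G)\).

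The two directions together give \(\rho_e^o(G)=\alpha(A_G)\). The only delicate points are the injectivity check in the first direction, which rules out both orientations of one edge appearing in \(I\), and the appeal to \(E(G[V(D)])=D\) in the second direction, which guarantees that no stray edges join different components. The rest is routine case checking.
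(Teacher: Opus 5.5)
Your proof is correct and follows the same route as the paper: for one inequality you orient each induced-star component of \(G[V(D)]\) away from a chosen centre to get an independent set of \(A_G\), and for the other you forget orientations after noting that \((c,\ell)\) and \((\ell,c)\) are incompatible. The only differences are minor: you check the no-common-edge condition directly from the definition rather than through the induced-star characterization, which is slightly tidier. Also, the absence of edges between distinct components of \(G[V(D)]\) follows from the definition of components, so the identity \(E(G[V(D)])=D\) is not actually needed for that step, contrary to your closing remark.
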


\begin{proof}
	Let \(D\) be an edge open packing of \(G\). Then \(G[V(D)]\) is a
	disjoint union of induced stars. For every edge of \(D\), choose an orientation
	as follows. If its star component has at least two edges, orient it from the
	unique centre of that star to its leaf. If the component is a single edge
	\(K_2\), choose either endpoint as its centre and orient the edge accordingly.
	Let \(I_D\) be the set of all ordered edges obtained in this way. Thus, \(I_D\)
	contains exactly one ordered edge for each edge of \(D\), and hence,
	\(|I_D|=|D|\).

	Any two ordered edges in \(I_D\) are compatible. If they belong to the same
	star, then they have the same centre and their leaves are nonadjacent because
	the star is induced. If they belong to different star components, then there
	is no edge between the two components. Hence, \(I_D\) is an independent set of
	\(A_G\), and therefore, \(\rho_e^o(G)\le \alpha(A_G)\).

	Conversely, let \(I\) be an independent set of \(A_G\), and define
	\(D_I=\{c\ell:(c,\ell)\in I\}\). Since \(I\) is independent, any two ordered
	edges in \(I\) are compatible. All ordered edges with the same first coordinate
	form an induced star with that first coordinate as the centre. Ordered edges
	with different first coordinates are vertex-disjoint and have no edge between
	their endpoint sets. Moreover, the map from \(I\) to \(D_I\) is injective.
	Indeed, the only two distinct ordered edges representing the same undirected
	edge \(c\ell\) are \((c,\ell)\) and \((\ell,c)\), and these two ordered edges
	are incompatible because their centres are different while their endpoint sets
	are not disjoint. Hence, an independent set cannot contain both of them.
	Therefore, \(G[V(D_I)]\) is a disjoint union of induced stars, \(D_I\) is an
	edge open packing of \(G\), and \(|D_I|=|I|\).
	Thus \(\alpha(A_G)\le \rho_e^o(G)\), and consequently
	\(\rho_e^o(G)=\alpha(A_G)\).
\end{proof}

\subsection{Permutation graphs and the product order}

Let \(G\) be a permutation graph. Fix a permutation representation of \(G\),
or equivalently two linear orders \(<_{1}\) and \(<_{2}\) on \(V(G)\), such
that two vertices are adjacent in \(G\) if and only if they appear in different
orders in the two linear orders.

Define a partial order \(\prec\) on \(V(G)\) by \(u\prec v\) if and only if
\(u<_{1}v\) and \(u<_{2}v\).
Then two vertices are adjacent in \(G\) exactly when they are incomparable
under \(\prec\). Equivalently, non-edges of \(G\) are precisely comparable pairs
of the poset \((V(G),\prec)\).

We shall use the following elementary property of two-dimensional partial
orders.

\begin{lemma}\label{lem:two-crossing-blocks}
	Let \((V,\prec)\) be a partial order obtained from two linear orders. Let
	\(\{a,b\}\) and \(\{c,d\}\) be two incomparable pairs. Suppose that every
	vertex of \(\{a,b\}\) is comparable with every vertex of \(\{c,d\}\). Then one
	of the two pairs is completely before the other. More precisely, either
	\(a\prec c\), \(a\prec d\), \(b\prec c\), and \(b\prec d\), or
	\(c\prec a\), \(c\prec b\), \(d\prec a\), and \(d\prec b\).
\end{lemma}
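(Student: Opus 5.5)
We fix the two linear orders \(<_1,<_2\) with \(u\prec v\) iff \(u<_1 v\) and \(u<_2 v\). Two distinct elements are incomparable exactly when the orders disagree on them. The plan is to first orient each of the four cross relations, and then show that any mixture of directions forces a comparability inside one of the pairs, which is forbidden.

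\emph{Step 1.} By hypothesis, each of the four cross pairs is comparable, so each of \(a\)-\(c\), \(a\)-\(d\), \(b\)-\(c\), \(b\)-\(d\) is related by \(\prec\) in one direction or the other.

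\emph{Step 2 (a fixed element of one pair).} Consider the element \(a\) and the incomparable pair \(\{c,d\}\). Suppose \(a\prec c\) but \(d\prec a\). Transitivity then gives \(d\prec c\), contradicting the incomparability of \(c\) and \(d\). The symmetric mixed case, \(c\prec a\) together with \(a\prec d\), fails in the same way. Hence \(a\) lies either below both of \(c,d\) or above both. The identical argument applies to \(b\), and, by swapping the roles of the pairs, to \(c\) and to \(d\) relative to \(\{a,b\}\).

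\emph{Step 3 (combining).} Assume \(a\prec c\). Step 2 applied to \(a\) gives \(a\prec d\). Step 2 applied to \(c\) shows that \(c\) is above both \(a\) and \(b\), so \(b\prec c\). Then Step 2 applied to \(b\) gives \(b\prec d\). This is the first alternative of the lemma. If instead \(c\prec a\), the same chain of steps with all relations reversed yields the second alternative.

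The only delicate point is keeping track of the transitivity argument in Step 2. Note that the two-dimensionality of the order is not actually needed; the argument uses only that \(\prec\) is a partial order. No real obstacle is expected.
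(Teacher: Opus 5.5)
Your proof is correct and essentially the paper's: the key step is that an element comparable with both members of an incomparable pair lies entirely on one side of that pair. The only difference is the combination step, where the paper uses transitivity to show directly that \(c\) and \(d\) lie on the same side of \(\{a,b\}\), while you chain the claim through \(a\), \(c\), \(b\); as you note, neither argument needs two-dimensionality, only transitivity of \(\prec\).
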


\begin{proof}
	Let \(z\) be a vertex comparable with both \(a\) and \(b\). We claim that \(z\)
	must precede both \(a\) and \(b\), or follow both of them. There are four
	possible combinations of the comparisons between \(z\) and \(a,b\). The mixed
	combination \(a\prec z\prec b\) would imply \(a\prec b\), and the other mixed
	combination \(b\prec z\prec a\) would imply \(b\prec a\), by transitivity.
	Both conclusions contradict the incomparability of \(a\) and \(b\). Thus,
	either \(z\prec a\) and \(z\prec b\), or \(a\prec z\) and \(b\prec z\).

	Apply this claim to the vertices \(c\) and \(d\). Both \(c\) and \(d\) are
	comparable with both \(a\) and \(b\), so each of them is positioned either
	completely before the pair \(\{a,b\}\) or completely after it. If one of
	\(c,d\) preceded \(\{a,b\}\) and the other followed \(\{a,b\}\), then
	transitivity through either \(a\) or \(b\) would make \(c\) and \(d\)
	comparable, contradicting the assumption that \(\{c,d\}\) is an incomparable
	pair. Hence, \(c\) and \(d\) are positioned on
	the same side of \(\{a,b\}\). Thus either both \(c,d\) are before both
	\(a,b\), or both \(a,b\) are before both \(c,d\). This proves the lemma.
\end{proof}

\subsection{The complement of \texorpdfstring{\(A_G\)}{A\_G}}

Let \(H_G=\overline{A_G}\). Thus, \(V(H_G)=V(A_G)\), and two vertices of
\(H_G\) are adjacent exactly when the corresponding ordered edges of \(G\) are
compatible. We prove that, for permutation graphs, \(H_G\) is a comparability
graph. Recall that a graph is a \emph{cocomparability graph} if its complement
is a comparability graph. Consequently, \(A_G\) is a cocomparability graph.

We now define an orientation \(\triangleleft\) of the edges of \(H_G\). Thus,
whenever we consider two vertices \(e=(c,\ell)\) and \(f=(d,r)\) below, we
assume that they are adjacent in \(H_G\). Equivalently, the ordered edges
\((c,\ell)\) and \((d,r)\) are compatible. In particular, if \(c\neq d\), then
the four endpoints \(c,\ell,d,r\) are distinct. Hence, pairs such as
\((c,\ell)\) and \((d,\ell)\), with \(c\neq d\), are not edges of \(H_G\), and
no orientation has to be assigned to them.

Let \(e=(c,\ell)\) and \(f=(d,r)\) be two adjacent vertices of \(H_G\).
First suppose that \(c=d\). Then compatibility implies that
\(\ell r\notin E(G)\). Since nonedges of a permutation graph are exactly
comparable pairs under \(\prec\), exactly one of \(\ell\prec r\) and
\(r\prec \ell\) holds. If \(\ell\prec r\), we orient the edge from
\((c,\ell)\) to \((c,r)\), and write \((c,\ell)\triangleleft(c,r)\).
Otherwise, we have \(r\prec \ell\), and we orient the edge from \((c,r)\) to
\((c,\ell)\), writing \((c,r)\triangleleft(c,\ell)\).

Now suppose \(c\neq d\). Since \(e\) and \(f\) are adjacent in \(H_G\), the
ordered edges \((c,\ell)\) and \((d,r)\) are compatible. Hence, there is no edge
between the endpoint sets \(\{c,\ell\}\) and \(\{d,r\}\). Therefore, every
vertex of \(\{c,\ell\}\) is comparable with every vertex of \(\{d,r\}\) under
\(\prec\). Also, since \(c\ell\in E(G)\) and \(dr\in E(G)\), the pairs
\(\{c,\ell\}\) and \(\{d,r\}\) are incomparable pairs under \(\prec\). By
Lemma~\ref{lem:two-crossing-blocks}, one of these two pairs is positioned
completely before the other.

If \(c\prec d\), \(c\prec r\), \(\ell\prec d\), and \(\ell\prec r\), then we
orient the edge from \((c,\ell)\) to \((d,r)\), that is,
\((c,\ell)\triangleleft(d,r)\). Otherwise, the second endpoint pair is
positioned completely before the first one; equivalently, \(d\prec c\),
\(d\prec \ell\), \(r\prec c\), and \(r\prec \ell\). In this case, we orient
the edge from \((d,r)\) to \((c,\ell)\), that is,
\((d,r)\triangleleft(c,\ell)\).

This defines exactly one direction for every edge of \(H_G\).

\begin{lemma}\label{lem:orientation-transitive}
	The orientation \(\triangleleft\) of \(H_G\) is transitive.
\end{lemma}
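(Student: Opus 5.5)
The plan is to check the definition of transitivity directly. Take three vertices \(e=(c,\ell)\), \(f=(d,r)\), \(g=(h,s)\) of \(H_G\) with \(e\triangleleft f\) and \(f\triangleleft g\). I must show two things: \(e\) and \(g\) are adjacent in \(H_G\), i.e.\ compatible, and the edge \(eg\) is oriented as \(e\triangleleft g\).

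First I restate \(\triangleleft\) uniformly in terms of \(\prec\). The relation \(e\triangleleft f\) holds exactly when one of the following is true:
\begin{itemize}
\item \(c=d\) and \(\ell\prec r\);
\item \(c\neq d\) and every vertex of \(\{c,\ell\}\) precedes every vertex of \(\{d,r\}\) under \(\prec\).
\end{itemize}

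I also use a converse observation. Suppose \(c\neq h\) and every vertex of \(\{c,\ell\}\) strictly precedes every vertex of \(\{h,s\}\). Then the four endpoints are distinct, since \(\prec\) is strict. Each cross pair is comparable, so it is a nonedge of \(G\). Hence \((c,\ell)\) and \((h,s)\) are compatible, and by the definition the edge is oriented \((c,\ell)\triangleleft(h,s)\). Similarly, if \(c=h\) and \(\ell\prec s\), then \(\ell\neq s\) and \(\ell s\notin E(G)\). So the two ordered edges are compatible with the same centre, and they are oriented \(e\triangleleft g\). It therefore suffices to derive one of these two conclusions.

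I split into four cases according to whether \(c=d\) and whether \(d=h\).
\begin{enumerate}
\item \(c=d=h\). Here \(\ell\prec r\prec s\), so \(\ell\prec s\), and the same-centre conclusion follows.
\item \(c=d\neq h\). Here \(\ell\prec r\), and \(c\) and \(r\) both precede \(h\) and \(s\). Then \(c\prec h\) and \(c\prec s\) directly, while \(\ell\prec r\prec h\) and \(\ell\prec r\prec s\). In particular \(c\prec h\) gives \(c\neq h\), so the different-centre conclusion applies.
\item \(c\neq d=h\). Here \(\{c,\ell\}\) precedes \(\{d,r\}\), and \(r\prec s\). Then \(c\) and \(\ell\) precede \(d=h\) directly, and precede \(s\) via \(r\). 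Again \(c\prec h\) forces \(c\neq h\).
\item \(c\neq d\neq h\). Chaining \(\{c,\ell\}\prec\{d,r\}\prec\{h,s\}\) through \(d\) gives that every vertex of \(\{c,\ell\}\) precedes every vertex of \(\{h,s\}\). In particular \(c\prec h\), so \(c\neq h\).
\end{enumerate}

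The only subtle point is ensuring that \(g\) is genuinely adjacent to \(e\) in \(H_G\) and that its orientation is the one assigned by the definition, rather than merely some comparability. This is handled by the converse observation above, which uses the fact that the orientation was defined by exactly these \(\prec\)-conditions. I also use Lemma~\ref{lem:two-crossing-blocks} implicitly, through well-definedness: the two alternatives in the definition are mutually exclusive, so the derived conditions pin down the direction \(e\triangleleft g\).
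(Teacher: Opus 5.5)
Your proof is correct and follows essentially the same route as the paper. It uses the same four cases by centre type and chains $\prec$ to obtain complete precedence (or $\ell\prec s$), then reads off compatibility and the direction $e\triangleleft g$ from the defining rule, and it makes explicit the point that strict precedence forces four distinct endpoints, which the paper leaves implicit. One minor quibble: the two alternatives are mutually exclusive because $\prec$ is asymmetric, not because of Lemma~\ref{lem:two-crossing-blocks}, which only guarantees that one of them holds; you do not actually need either fact, since your derived conditions are literally the first clause of the orientation rule.
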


\begin{proof}
	Let \(e=(c,\ell)\), \(f=(d,r)\), and \(g=(a,b)\) be vertices of \(H_G\) such
	that \(e\triangleleft f\) and \(f\triangleleft g\). We show that
	\(e\triangleleft g\). By the definition of the orientation, each comparison is
	of one of two types: either the two ordered edges have the same centre, or
	their endpoint pairs are distinct, and one endpoint pair is positioned
	completely before the other in the product order \(\prec\).

	First, suppose that both \(e\triangleleft f\) and \(f\triangleleft g\) are of
	different-centre type. Then the endpoint pair \(\{c,\ell\}\) is positioned
	completely before the endpoint pair \(\{d,r\}\), and the endpoint pair
	\(\{d,r\}\) is positioned completely before the endpoint pair \(\{a,b\}\).
	By the transitivity of \(\prec\), the endpoint pair \(\{c,\ell\}\) is
	positioned completely before the endpoint pair \(\{a,b\}\). Hence, every vertex
	of \(\{c,\ell\}\) is comparable with every vertex of \(\{a,b\}\). Therefore,
	there is no edge of \(G\) between the two endpoint sets. Also, since
	\(c\ell\in E(G)\) and \(ab\in E(G)\), both endpoint pairs are incomparable
	under \(\prec\). Thus \((c,\ell)\) and \((a,b)\) are compatible, so \(e\) and
	\(g\) are adjacent in \(H_G\). Since \(\{c,\ell\}\) is positioned completely
	before \(\{a,b\}\), the orientation rule gives \(e\triangleleft g\).

	Next, suppose that \(e\triangleleft f\) is of same-centre type and
	\(f\triangleleft g\) is of different-centre type. Then \(e=(c,\ell)\) and
	\(f=(c,r)\), with \(\ell\prec r\). Also, since \(f\triangleleft g\) is of
	different-centre type, the endpoint pair \(\{c,r\}\) is positioned completely
	before the endpoint pair \(\{a,b\}\). Hence, \(c\prec a\), \(c\prec b\),
	\(r\prec a\), and \(r\prec b\). Since \(\ell\prec r\), we also have
	\(\ell\prec a\) and \(\ell\prec b\). Therefore,  the endpoint pair
	\(\{c,\ell\}\) is positioned completely before \(\{a,b\}\). Hence
	\((c,\ell)\) and \((a,b)\) are compatible, \(e\) and \(g\) are adjacent in
	\(H_G\), and the orientation rule gives \(e\triangleleft g\).

	Now suppose that \(e\triangleleft f\) is of the different-centre type and
	\(f\triangleleft g\) is of the same-centre type. Then the endpoint pair
	\(\{c,\ell\}\) is positioned completely before the endpoint pair \(\{d,r\}\).
	Also, \(f=(d,r)\) and \(g=(d,b)\), with \(r\prec b\). Thus \(c\prec d\),
	\(c\prec r\), \(\ell\prec d\), and \(\ell\prec r\). Since \(r\prec b\), we get
	\(c\prec b\) and \(\ell\prec b\). Hence, the endpoint pair \(\{c,\ell\}\) is
	positioned completely before \(\{d,b\}\). Therefore \((c,\ell)\) and
	\((d,b)\) are compatible, \(e\) and \(g\) are adjacent in \(H_G\), and the
	orientation rule gives \(e\triangleleft g\).

	Finally, suppose that both \(e\triangleleft f\) and \(f\triangleleft g\) are of
	the same-centre type. Then all three ordered edges have the same centre. Thus
	\(e=(c,\ell)\), \(f=(c,r)\), and \(g=(c,b)\), with \(\ell\prec r\) and
	\(r\prec b\). By the transitivity of \(\prec\), we have \(\ell\prec b\). Hence
	\(\ell b\notin E(G)\), so \((c,\ell)\) and \((c,b)\) are compatible. Thus,
	\(e\) and \(g\) are adjacent in \(H_G\), and the same-centre orientation rule
	gives \(e\triangleleft g\).

	In every case, \(e\triangleleft f\) and \(f\triangleleft g\) imply
	\(e\triangleleft g\). Therefore, the orientation \(\triangleleft\) is
	transitive.
\end{proof}

\begin{theorem}\label{thm:ag-permutation-cocomparability}
	If \(G\) is a permutation graph, then \(A_G\) is a cocomparability graph.
\end{theorem}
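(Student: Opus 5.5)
The theorem should follow almost immediately from Lemma~\ref{lem:orientation-transitive}, so the plan is to assemble the pieces already in place. By definition, \(A_G\) is a cocomparability graph precisely when its complement \(\overline{A_G}=H_G\) is a comparability graph. So it suffices to exhibit a transitive orientation of \(H_G\). The candidate is the orientation \(\triangleleft\) constructed above.

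First, I check that \(\triangleleft\) is a genuine orientation, i.e., every edge of \(H_G\) receives exactly one direction. Let \(e=(c,\ell)\) and \(f=(d,r)\) be adjacent in \(H_G\).

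\begin{itemize}
\item If \(c=d\), compatibility gives \(\ell\neq r\) and \(\ell r\notin E(G)\). Hence \(\ell\) and \(r\) are comparable under \(\prec\), and exactly one of \(\ell\prec r\), \(r\prec\ell\) holds, by antisymmetry.
\item If \(c\neq d\), Lemma~\ref{lem:two-crossing-blocks} places one endpoint pair completely before the other. Both alternatives cannot hold simultaneously, since that would give \(c\prec d\) and \(d\prec c\).
\end{itemize}

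So each edge of \(H_G\) gets a unique direction, and no edge of \(H_G\) is left unoriented.

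Second, I invoke Lemma~\ref{lem:orientation-transitive}, which states that \(\triangleleft\) is transitive. A transitive orientation must also be acyclic, i.e.\ irreflexive, and this is automatic here. Every comparison \(e\triangleleft f\) places the pairs in strictly \(\prec\)-increasing position, or strictly increasing leaves when the centres agree, so no loop \(e\triangleleft e\) can arise. Hence \(H_G\) is a comparability graph, and consequently \(A_G=\overline{H_G}\) is a cocomparability graph.

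The only step needing care is the well-definedness check in the first step. Its one subtle point is the same-centre case, where the assumption \(\ell\neq r\) is what guarantees comparability of \(\ell\) and \(r\) rather than equality. Everything else is bookkeeping on top of the two lemmas.
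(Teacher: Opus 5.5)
Your proposal is correct and follows the same route as the paper: the construction orients every edge of \(H_G\) exactly once, Lemma~\ref{lem:orientation-transitive} makes that orientation transitive, so \(H_G\) is a comparability graph and \(A_G=\overline{H_G}\) is cocomparability. Your well-definedness check spells out what the paper states just before the lemma, and your irreflexivity remark is harmless but unnecessary, since acyclicity already follows from transitivity together with each edge receiving exactly one direction.
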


\begin{proof}
	The preceding construction gives an orientation of every edge of
	\(H_G=\overline{A_G}\). By Lemma~\ref{lem:orientation-transitive}, this
	orientation is transitive. Hence, \(H_G\) is a comparability graph. Therefore,
	\(A_G=\overline{H_G}\) is a cocomparability graph.
\end{proof}

\subsection{Algorithmic consequence}

We now obtain the algorithm for permutation graphs.

\begin{theorem}\label{thm:eop-permutation-poly}
	Given a permutation representation of a permutation graph \(G\), the
	\textnormal{\textsc{Maximum Edge Open Packing}} problem can be solved in
	\(O(n^2+m^2)\) time.
\end{theorem}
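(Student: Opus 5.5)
By Lemma~\ref{lem:eop-independent-ag}, $\rho_e^o(G)=\alpha(A_G)$. An independent set of $A_G$ is a clique of $H_G$. By Lemma~\ref{lem:orientation-transitive}, the orientation $\triangleleft$ of $H_G$ is transitive. It is also acyclic: it is induced by a strict order on ordered edges (``endpoint pair completely before'' in $\prec$, or same centre with leaves ordered by $\prec$), so $\triangleleft$ itself is a strict partial order on $V(H_G)$. In a transitively oriented acyclic graph, cliques are exactly the chains, and every directed path is a chain. So a maximum clique of $H_G$ is exactly the vertex set of a longest directed path in the DAG $(V(H_G),\triangleleft)$. Given the clique, we forget orientations as in the proof of Lemma~\ref{lem:eop-independent-ag} and obtain the optimal packing.

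The algorithm proceeds in four steps. First, from the permutation representation, we store for each vertex its positions $\pi_1(v),\pi_2(v)$ in $<_1,<_2$. This lets us test adjacency in $G$ and each relation $u\prec v$ in $O(1)$ time. Second, we list the $2m$ ordered edges. Third, we compute a topological order of the DAG without building it explicitly: sort the ordered edges $(c,\ell)$ lexicographically by the key $(\min_1,\ \pi_1(\ell))$, where $\min_1=\min(\pi_1(c),\pi_1(\ell))$. The check is as follows. If $(c,\ell)\triangleleft(d,r)$ is of different-centre type, both $c,\ell$ precede $d,r$ in $<_1$, so the first key strictly increases. If it is of same-centre type, the centre is shared, and $\ell\prec r$ gives $\pi_1(\ell)<\pi_1(r)$. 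The first key may stay equal or drop, though, so I would instead use a safer key that is a linear extension of $\triangleleft$: the sum $\pi_1(c)+\pi_2(c)+\pi_1(\ell)+\pi_2(\ell)$. This strictly increases along both arc types. In the different-centre case every coordinate increases. In the same-centre case $c$ is fixed and $\ell\prec r$. Sorting by this key with counting sort takes $O(n+m)$ time. Fourth, we run the standard longest-path dynamic program $L(f)=1+\max\{L(e): e\triangleleft f\}$ over the $2m$ vertices in this order. For each pair we test in $O(1)$ whether they are adjacent in $H_G$ (compatibility, via $O(1)$ adjacency and $\prec$ checks) and whether the arc points the right way. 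This costs $O((2m)^2)=O(m^2)$ time. Predecessor pointers recover the path.

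Preprocessing (positions and an $n\times n$ adjacency matrix) takes $O(n^2)$, giving $O(n^2+m^2)$ in total; isolated vertices are irrelevant, and if $m=0$ the answer is $0$. The main point requiring care is the correctness claim that chains of $\triangleleft$ are exactly the cliques of $H_G$, and that the DP over a linear extension finds a longest one. This depends on checking that $\triangleleft$ is a strict order: irreflexive and transitive by Lemma~\ref{lem:orientation-transitive}, hence acyclic. It also depends on confirming that the chosen sort key is a linear extension, which is exactly the coordinate-sum argument above.
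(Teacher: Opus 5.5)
Your proposal is correct and follows the paper's argument. Lemma~\ref{lem:eop-independent-ag} turns the problem into finding a maximum clique of $H_G$, the transitive orientation $\triangleleft$ turns that into a longest directed path, and the dynamic program over the $2m$ ordered edges costs $O(m^2)$ after $O(n^2)$ preprocessing. The only deviation is that you read a topological order off the coordinate-sum key instead of building $H_G$ and sorting it, which is a harmless implementation choice; your first lexicographic key would also have worked, since $\min(a,x)\le\min(a,y)$ whenever $x<y$, so that key never drops in the same-centre case.
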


\begin{proof}
	Let \(G\) be a permutation graph with a given permutation representation. If
	\(E(G)\) is empty, the empty set is a maximum edge open packing, so assume that
	\(E(G)\neq\emptyset\). Construct the oriented star-conflict graph
	\(A_G\). By Lemma~\ref{lem:eop-independent-ag}, we have
	\(\rho_e^o(G)=\alpha(A_G)\). By
	Theorem~\ref{thm:ag-permutation-cocomparability}, \(A_G\) is a cocomparability
	graph, or equivalently \(H_G=\overline{A_G}\) is a comparability graph. Hence,
	\(\alpha(A_G)=\omega(H_G)\). Since \(H_G\) is a comparability graph with a
	given transitive orientation, a maximum clique of \(H_G\) can be computed by
	a longest-path dynamic program in time linear in the size of \(H_G\).
	Translating this clique back gives a
	maximum edge open packing of \(G\). The running-time analysis below gives the
	claimed bound.
\end{proof}

If the representation is not supplied, one can obtain it in polynomial time
using a standard permutation-graph recognition algorithm~\cite{Golumbic1980}.

\subsection{Running time}

Let \(n=|V(G)|\) and \(m=|E(G)|\). Assume that a permutation representation of
\(G\), equivalently the two linear orders \(<_{1}\) and \(<_{2}\), is given.
If \(m=0\), we return the empty packing immediately. Henceforth, assume
\(m\ge 1\).

From these two orders, we construct the product order \(\prec\) on \(V(G)\).
We also store the adjacency matrix of \(G\) and the comparison matrix of
\(\prec\). This takes \(O(n^2)\) time and allows every adjacency and
comparability test to be performed in constant time.

The graph \(A_G\) has one vertex for each ordered edge of \(G\). Hence,
\(|V(A_G)|=2m\). Since \(H_G=\overline{A_G}\), we also have
\(|V(H_G)|=2m\). Let \(N=2m\). We explicitly construct \(H_G\). There are
\(\binom{N}{2}=O(N^2)=O(m^2)\) pairs of ordered edges. For each pair
\((c,\ell)\) and \((d,r)\), we test compatibility in constant time. If
\(c=d\), we check whether \(\ell\neq r\) and \(\ell r\notin E(G)\). If
\(c\neq d\), we check whether \(c,\ell,d,r\) are distinct and whether
\(cd,cr,\ell d,\ell r\) are all nonedges of \(G\). Therefore, \(H_G\) can be
constructed in \(O(m^2)\) time.

During the same construction, we orient every edge of \(H_G\) according to the
orientation \(\triangleleft\). For same-centre compatible pairs, the
orientation is determined by comparing the two leaves in the product order.
For different-centre compatible pairs, the orientation is determined by
checking which endpoint pair is positioned completely before the other in
\(\prec\). Since all required comparisons are available in the comparison
matrix, each orientation decision takes constant time. Hence, the orientation
step also takes \(O(m^2)\) time.

By Theorem~\ref{thm:ag-permutation-cocomparability}, the graph \(H_G\) is a
comparability graph, and the construction above gives a transitive orientation
of \(H_G\). The vertices of every directed path form a clique, because
transitivity supplies an edge between every two nonconsecutive vertices of the
path. Conversely, the restriction of a transitive orientation to a clique is a
total order and therefore contains a directed path through all vertices of that
clique. Thus, a maximum clique has the same cardinality as a longest directed
path. Since a transitive orientation is acyclic, such a path can be found by
dynamic programming in a topological order in
\(O(|V(H_G)|+|E(H_G)|)\) time; see, for example,
Golumbic~\cite{Golumbic1980}. Since \(|V(H_G)|=2m\) and
\(|E(H_G)|=O(m^2)\), this step takes \(O(m^2)\) time.

Combining the above steps, the total running time is
\(O(n^2)+O(m^2)+O(m^2)+O(m^2)=O(n^2+m^2)\). Since
\(m\le \binom{n}{2}=O(n^2)\), this is bounded by \(O(n^4)\). Therefore, given
a permutation representation of \(G\), the
\textnormal{\textsc{Maximum Edge Open Packing}} problem can be solved on
permutation graphs in \(O(n^2+m^2)\le O(n^4)\) time.

\section{Interval Graphs}
\label{sec:interval}

In this section, we prove that the \textsc{Maximum Edge Open Packing} problem
is polynomial-time solvable on interval graphs. The proof uses the oriented
star-conflict graph \(A_G\) introduced in Section~\ref{sec:permutation}.
Recall that the vertices of \(A_G\) are the ordered edges of \(G\), and that
two vertices of \(A_G\) are adjacent precisely when the corresponding ordered
edges are incompatible. We write \(H_G=\overline{A_G}\). Thus, two vertices
of \(H_G\) are adjacent precisely when the corresponding ordered edges are
compatible.

By Lemma~\ref{lem:eop-independent-ag},
\(\rho_e^o(G)=\alpha(A_G)=\omega(H_G)\). We show that, when \(G\) is an
interval graph, \(H_G\) has a natural left-to-right transitive orientation.
Consequently, a maximum clique of \(H_G\), and hence a maximum edge open
packing of \(G\), can be obtained by a longest-path computation in this
orientation.

Let \(G\) be an interval graph, and fix an interval representation of \(G\).
For every \(v\in V(G)\), write \(I(v)=[L(v),R(v)]\). Two vertices are
adjacent in \(G\) precisely when their intervals intersect. Equivalently,
\(uv\notin E(G)\) if and only if \(R(u)<L(v)\) or \(R(v)<L(u)\). Since the
intervals are closed, intervals that meet at an endpoint are adjacent.
Accordingly, nonadjacency is expressed below by strict separation of
endpoints. We assume that the interval representation is supplied with the
input. If only the graph is supplied, a representation can be obtained as
part of standard linear-time interval-graph recognition
\cite{Golumbic1980}, without changing the asymptotic running time below.

\subsection{A left-to-right orientation of the compatibility graph}

Let \(e=(c,\ell)\) be a vertex of \(H_G\). Since \(c\ell\in E(G)\), the
intervals \(I(c)\) and \(I(\ell)\) intersect, and hence their union is an
interval. Define \(\lambda(e)=\min\{L(c),L(\ell)\}\) and
\(\rho(e)=\max\{R(c),R(\ell)\}\). We write
\(U(e)=I(c)\cup I(\ell)=[\lambda(e),\rho(e)]\).

We now orient every edge of \(H_G\). Let \(e=(c,\ell)\) and \(f=(d,r)\) be
adjacent vertices of \(H_G\), or equivalently, compatible ordered edges of
\(G\). First suppose that \(c=d\). Compatibility implies that \(\ell\neq r\)
and \(\ell r\notin E(G)\), so the leaf intervals are strictly separated. If
\(R(\ell)<L(r)\), orient \(e\to f\); otherwise,
\(R(r)<L(\ell)\), and orient \(f\to e\). Thus, compatible ordered edges with
the same centre are ordered according to the left-to-right order of their
leaves.

Now suppose that \(c\neq d\). Compatibility implies that
\(c,\ell,d,r\) are four distinct vertices and that no edge of \(G\) has one
endpoint in \(\{c,\ell\}\) and the other in \(\{d,r\}\). Therefore, every
interval in \(\{I(c),I(\ell)\}\) is disjoint from every interval in
\(\{I(d),I(r)\}\). Since each of \(U(e)\) and \(U(f)\) is an interval, the
two union intervals are strictly separated. If
\(\rho(e)<\lambda(f)\), orient \(e\to f\); otherwise,
\(\rho(f)<\lambda(e)\), and orient \(f\to e\). This assigns exactly one
direction to every edge of \(H_G\).

\begin{lemma}\label{lem:interval-orientation-transitive}
	The orientation of \(H_G\) defined above is transitive.
\end{lemma}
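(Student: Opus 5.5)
The plan is to mirror the proof of Lemma~\ref{lem:orientation-transitive}, with the product order \(\prec\) replaced by strict left-to-right separation of intervals. Let \(e=(c,\ell)\), \(f=(d,r)\), \(g=(a,b)\) be vertices of \(H_G\) with \(e\to f\) and \(f\to g\). I will show that \(e\) and \(g\) are compatible, hence adjacent in \(H_G\), and that the rule orients this edge as \(e\to g\).

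First I record one reformulation, which turns each case into an inequality chase. For ordered edges \(e=(c,\ell)\) and \(g=(a,b)\), the condition \(\rho(e)<\lambda(g)\) implies that every interval among \(I(c),I(\ell)\) lies strictly left of every interval among \(I(a),I(b)\). Consequently \(c,\ell,a,b\) are distinct and \(ca,cb,\ell a,\ell b\notin E(G)\). So \(e\) and \(g\) are compatible with different centres, and the orientation rule gives \(e\to g\). In the same-centre situation, \(R(\ell)<L(b)\) gives \(\ell\neq b\) and \(\ell b\notin E(G)\), hence compatibility and the orientation \(e\to g\).

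I then split into four cases according to the types of the two arcs.
\begin{itemize}
\item \emph{Both arcs of different-centre type.} We have \(\rho(e)<\lambda(f)\le\rho(f)<\lambda(g)\), so \(\rho(e)<\lambda(g)\) and the reformulation applies.
\item \emph{Same-centre arc, then different-centre arc.} Here \(f=(c,r)\) with \(R(\ell)<L(r)\), and \(\rho(f)<\lambda(g)\). Then \(R(c)\le\rho(f)\) and \(R(\ell)<L(r)\le R(r)\le\rho(f)\). Hence \(\rho(e)\le\rho(f)<\lambda(g)\).
\item \emph{Different-centre arc, then same-centre arc.} Here \(g=(d,b)\) with \(R(r)<L(b)\), and \(\rho(e)<\lambda(f)\). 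Then \(L(d)\ge\lambda(f)\) and \(L(b)>R(r)\ge L(r)\ge\lambda(f)\). Hence \(\lambda(g)\ge\lambda(f)>\rho(e)\).
\item \emph{Both arcs of same-centre type.} All three ordered edges share the centre \(c\), and \(R(\ell)<L(r)\le R(r)<L(b)\). This gives the same-centre conclusion.
\end{itemize}

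The only delicate point is the two mixed cases. There one must check that \(g\) cannot accidentally share a vertex with \(e\), for example \(a=c\). This is exactly what the reformulation in the first step handles: strict separation of the union intervals forces the four endpoints to be distinct. The orientation is then determined unambiguously, because the two separation alternatives \(\rho(e)<\lambda(g)\) and \(\rho(g)<\lambda(e)\) are mutually exclusive.
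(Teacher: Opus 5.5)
Your proof is correct and takes essentially the same route as the paper: the same four-way case split by arc type, with each mixed or different-centre case reduced to \(\rho(e)<\lambda(g)\) and the all-same-centre case to \(R(\ell)<L(b)\). The only difference is organizational. You state once, at the start, that strict separation of the union intervals forces four distinct endpoints, no edges between the endpoint sets, and the orientation \(e\to g\); the paper repeats this observation inside each case.
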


\begin{proof}
	Let \(e=(c,\ell)\), \(f=(d,r)\), and \(g=(a,b)\) be vertices of \(H_G\)
	such that \(e\to f\) and \(f\to g\). We prove that \(e\) and \(g\) are
	adjacent in \(H_G\), and that their edge is oriented as \(e\to g\).

	Suppose first that both arcs are of the different-centre type. Then
	\(\rho(e)<\lambda(f)\) and \(\rho(f)<\lambda(g)\). Hence,
	\(\rho(e)<\lambda(g)\), so \(U(e)\) lies strictly to the left of \(U(g)\).
	This strict separation implies both that the endpoint sets of \(e\) and
	\(g\) are disjoint and that no edge of \(G\) joins the two endpoint sets:
	if a vertex occurred in both, its interval would belong to both union
	intervals. Thus, \(e\) and \(g\) satisfy the different-centre compatibility
	condition, and the orientation rule gives \(e\to g\).

	Next suppose that \(e\to f\) is of the same-centre type and \(f\to g\) is
	of the different-centre type. Write \(e=(c,\ell)\) and \(f=(c,r)\). We
	have \(R(\ell)<L(r)\) and \(\rho(f)<\lambda(g)\). In particular,
	\(R(c)\leq \rho(f)<\lambda(g)\), while
	\(R(\ell)<L(r)\leq \rho(f)<\lambda(g)\). Therefore,
	\(\rho(e)<\lambda(g)\). As in the first case, strict separation guarantees
	that the two ordered edges have four distinct endpoints and that every cross
	pair is a nonedge of \(G\). Hence, \(e\) and \(g\) are compatible, and the
	orientation rule gives \(e\to g\).

	Now suppose that \(e\to f\) is of the different-centre type and
	\(f\to g\) is of the same-centre type. Write \(f=(d,r)\) and
	\(g=(d,b)\). We have \(\rho(e)<\lambda(f)\) and \(R(r)<L(b)\). It follows
	that \(\rho(e)<L(d)\) and \(\rho(e)<L(r)\leq R(r)<L(b)\). Consequently,
	\(\rho(e)<\lambda(g)\). Again, this strict separation ensures that the
	endpoint sets are disjoint and anticomplete. Thus, \(e\) and \(g\) are
	compatible, and the orientation rule gives \(e\to g\).

	Finally, suppose that both arcs are of the same-centre type. All three
	ordered edges then have the same centre, so we may write
	\(e=(c,\ell)\), \(f=(c,r)\), and \(g=(c,b)\). The orientation rule gives
	\(R(\ell)<L(r)\) and \(R(r)<L(b)\). Hence,
	\(R(\ell)<L(b)\), and therefore \(\ell b\notin E(G)\). The ordered edges
	\(e\) and \(g\) are compatible with the same centre, and their edge is
	oriented as \(e\to g\).

	In every case, \(e\to f\) and \(f\to g\) imply that \(eg\in E(H_G)\) and
	that \(eg\) is oriented as \(e\to g\). Hence, the orientation is
	transitive.
\end{proof}

\begin{corollary}\label{cor:interval-H-comparability}
	If \(G\) is an interval graph, then \(H_G=\overline{A_G}\) is a
	comparability graph.
\end{corollary}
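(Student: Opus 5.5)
The corollary should follow directly from the left-to-right orientation and Lemma~\ref{lem:interval-orientation-transitive}. By definition, a graph is a comparability graph if it admits a transitive orientation. So I will check two things: the rule above assigns exactly one direction to every edge of \(H_G\), and that orientation is transitive.

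For the first point, I will re-examine the two cases in the definition. Suppose \(e=(c,\ell)\) and \(f=(d,r)\) are adjacent in \(H_G\) with \(c=d\). Compatibility gives \(\ell\neq r\) and \(\ell r\notin E(G)\), so exactly one of \(R(\ell)<L(r)\) and \(R(r)<L(\ell)\) holds, since both would force \(R(\ell)<L(r)\le R(r)<L(\ell)\le R(\ell)\). If \(c\neq d\), the four intervals are pairwise cross-disjoint. Each union \(U(e)\), \(U(f)\) is an interval because \(c\ell\) and \(dr\) are edges. Moreover, the two unions are disjoint: a common point would lie in one interval from each side, contradicting cross-disjointness. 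Two disjoint closed intervals are strictly separated in exactly one direction. The rule never compares an arc of one type with one of the other type for the same pair, because the type depends only on whether the first coordinates agree. Hence each edge of \(H_G\) gets exactly one direction, and no pair of non-adjacent vertices gets one.

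For the second point, I will invoke Lemma~\ref{lem:interval-orientation-transitive}. It shows that \(e\to f\) and \(f\to g\) force \(eg\in E(H_G)\), oriented as \(e\to g\). This is exactly the definition of a transitive orientation. Hence \(H_G\) is a comparability graph, and \(A_G=\overline{H_G}\) is a cocomparability graph, just as in Theorem~\ref{thm:ag-permutation-cocomparability}.

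The only step needing any care is the well-definedness check in the different-centre case. There one must note that cross-disjointness of the individual intervals implies disjointness of the two unions, and that this requires each union to be connected. This holds because both ordered edges are edges of \(G\), so \(I(c)\cap I(\ell)\neq\emptyset\) and \(I(d)\cap I(r)\neq\emptyset\).
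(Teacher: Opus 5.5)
Your proposal is correct and follows the same route as the paper: the paper's proof just cites Lemma~\ref{lem:interval-orientation-transitive}, relying on the construction's remark that every edge of \(H_G\) receives exactly one direction, and you verify that well-definedness explicitly before citing the lemma. One small precision: connectedness of \(U(e)\) and \(U(f)\) is not needed for their disjointness, only for the strict separation \(\rho(e)<\lambda(f)\) or \(\rho(f)<\lambda(e)\), which is how your second paragraph correctly uses it.
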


\begin{proof}
	The orientation constructed above is a transitive orientation of \(H_G\) by
	Lemma~\ref{lem:interval-orientation-transitive}.
\end{proof}

\subsection{Algorithm and correctness}

If \(m=|E(G)|=0\), the algorithm returns the empty set, which is the unique
maximum edge open packing. Assume henceforth that \(m>0\). List the \(2m\)
ordered edges \((c,\ell)\) with \(c\ell\in E(G)\), construct \(H_G\) by
testing each pair for compatibility, and orient every edge of \(H_G\) by the
left-to-right rule above. We then compute a longest directed path in this
orientation.

For completeness, we explain why this last step yields a maximum clique. In
any transitive orientation, the vertices of a directed path are pairwise
adjacent, by repeated application of transitivity, and hence form a clique.
Conversely, the restriction of a transitive orientation to a clique is a
transitive tournament. Its vertices have a total left-to-right order and
therefore form a directed path in that order. Consequently, the maximum
clique size of \(H_G\) is exactly the maximum number of vertices on a directed
path.

Let \(K\) be the clique obtained from a longest directed path. Replace each
ordered edge \((c,\ell)\in K\) by its underlying edge \(c\ell\). A clique of
\(H_G\) cannot contain both \((u,v)\) and \((v,u)\), because those two ordered
edges are incompatible. Thus, this replacement preserves cardinality. Since
\(K\) is an independent set of \(A_G\), the proof of
Lemma~\ref{lem:eop-independent-ag} shows that the resulting set is an edge
open packing of \(G\). Conversely, every edge open packing admits at least one
orientation that gives a clique of \(H_G\). This correspondence need not be
one-to-one, because an isolated \(K_2\)-component of the induced star forest
may be oriented in either direction. Nevertheless, it preserves the optimum
value, and therefore the packing obtained from \(K\) is maximum.

\subsection{Running time}

Let \(n=|V(G)|\) and \(m=|E(G)|\). We store an adjacency matrix of \(G\),
which takes \(O(n^2)\) time and permits every adjacency test in
constant time. The graph \(H_G\) has \(2m\) vertices. Listing its vertices
takes \(O(m)\) time.

To construct \(H_G\), we examine all \(O(m^2)\) pairs of ordered edges. For
each pair, compatibility is tested in constant time using the adjacency
matrix. For ordered edges with the same centre, it is enough to test whether
their leaves are distinct and nonadjacent. For ordered edges with different
centres, we test whether their four endpoints are distinct and whether the
four cross pairs are nonedges of \(G\). Each compatible pair is oriented in
constant time by comparing interval endpoints. Hence, constructing and
orienting \(H_G\) takes \(O(m^2)\) time.

The transitive orientation is acyclic. After computing a topological ordering,
a standard dynamic program computes a longest directed path, together with
predecessor pointers for reconstructing it, in
\(O(|V(H_G)|+|E(H_G)|)=O(m^2)\) time. Therefore, the total
running time is \(O(n^2+m^2)\). Since
\(m\leq \binom{n}{2}\), the running time is at most \(O(n^4)\).

\begin{theorem}\label{thm:eop-interval-o4}
	Given an interval representation of an interval graph \(G\), the
	\textnormal{\textsc{Maximum Edge Open Packing}} problem can be solved in
	\(O(n^2+m^2)\) time. In particular, it can be solved in
	\(O(n^4)\) time in terms of \(n\).
\end{theorem}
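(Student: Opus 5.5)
The plan is to assemble results already established in this section rather than prove anything new. If \(m=0\), the empty set is returned in constant time, so assume \(m\ge 1\).

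\textbf{Correctness.} By Lemma~\ref{lem:eop-independent-ag}, \(\rho_e^o(G)=\alpha(A_G)=\omega(H_G)\), since independent sets of \(A_G\) are exactly cliques of \(H_G=\overline{A_G}\). By Corollary~\ref{cor:interval-H-comparability}, the left-to-right orientation of \(H_G\) built from the supplied interval representation is transitive. The argument in the correctness subsection then gives two facts:
\begin{itemize}
\item the vertex sets of directed paths in this orientation are exactly the cliques of \(H_G\) (each clique is traversed in its induced transitive-tournament order);
\item mapping \((c,\ell)\mapsto c\ell\) turns a clique into an edge open packing of the same size, because \((u,v)\) and \((v,u)\) are never compatible.
\end{itemize}
Hence a longest directed path yields a maximum edge open packing.

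\textbf{Running time.} I will follow the running-time subsection step by step.
\begin{itemize}
\item Building the adjacency matrix takes \(O(n^2)\) time.
\item Listing the \(2m\) ordered edges takes \(O(m)\) time.
\item Each of the \(O(m^2)\) pairs of ordered edges is tested for compatibility in \(O(1)\) time: leaves distinct and nonadjacent in the same-centre case, or four distinct endpoints with four cross nonedges otherwise.
\item Each compatible pair is oriented in \(O(1)\) time by comparing \(R(\ell)\) with \(L(r)\), or \(\rho(e)\) with \(\lambda(f)\). The values \(\lambda,\rho\) are precomputed in \(O(m)\) time.
\end{itemize}
A transitive orientation is acyclic, since a directed cycle would force a loop by transitivity. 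So a topological sort followed by the standard longest-path dynamic program, with predecessor pointers, runs in \(O(|V(H_G)|+|E(H_G)|)=O(m+m^2)\) time. Summing these steps gives \(O(n^2+m^2)\), and \(m\le\binom n2\) gives the \(O(n^4)\) bound.

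\textbf{Main obstacle.} The only delicate point is that orientations must be assigned consistently, exactly one per edge of \(H_G\). In the different-centre case this relies on strict separation of the union intervals \(U(e)\) and \(U(f)\). That separation holds because the four intervals of \(c,\ell,d,r\) are pairwise cross-disjoint and each union \(U(\cdot)\) is itself an interval. This was already verified when the orientation was defined, so I will cite it rather than reprove it.
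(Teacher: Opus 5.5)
Your proposal is correct and takes essentially the same route as the paper. It combines \(\rho_e^o(G)=\omega(H_G)\) from Lemma~\ref{lem:eop-independent-ag} with the transitive left-to-right orientation of Corollary~\ref{cor:interval-H-comparability}, extracts a maximum clique as a longest directed path, and maps it injectively back to an edge open packing; the cost is \(O(n^2)\) preprocessing plus \(O(m^2)\) for construction, orientation, and the longest-path computation.
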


\begin{proof}
	By Lemma~\ref{lem:eop-independent-ag},
	\(\rho_e^o(G)=\omega(H_G)\). By
	Corollary~\ref{cor:interval-H-comparability}, \(H_G\) has the transitive
	orientation constructed above. A maximum clique of \(H_G\) is therefore
	obtained from a longest directed path, and the preceding construction maps
	such a clique to a maximum edge open packing of \(G\) without changing its
	cardinality. The construction of \(H_G\), its orientation, the longest-path
	computation, and reconstruction of the packing take \(O(n^2+m^2)\) time.
\end{proof}

\section{Well-Partitioned Chordal Graphs}
\label{sec:wpcg}

In this section, we give a dynamic programming algorithm for the
\textsc{Maximum Edge Open Packing} problem on well-partitioned chordal
graphs. The algorithm works over a partition tree and uses the fact that an
edge open packing has at most two endpoints in any clique bag.

Let \(G\) be a connected well-partitioned chordal graph, and let \(T\) be a
partition tree of \(G\). Thus, the nodes of \(T\) are pairwise disjoint clique
bags whose union is \(V(G)\). If \(XY\in E(T)\), then the edges between \(X\)
and \(Y\) form the complete bipartite graph
\(\operatorname{bd}(X,Y)\times\operatorname{bd}(Y,X)\); nonadjacent bags of
\(T\) have no edges between them. We root \(T\) at an arbitrary bag \(R\).
For a bag \(X\neq R\), let \(p(X)\) be its parent. We write \(T_X\) for the
subtree rooted at \(X\), and \(G_X\) for the subgraph induced by the bags of
\(T_X\).

\subsection{Structural observations}

We begin with two consequences of Observation~\ref{obs:eop-stars}.

\begin{observation}\label{obs:wpcg-bag-two}
	If \(D\) is an edge open packing of \(G\), then
	\(|X\cap V(D)|\leq 2\) for every bag \(X\in V(T)\).
\end{observation}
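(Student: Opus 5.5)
The plan is to argue by contradiction using Observation~\ref{obs:eop-stars}. Suppose some bag \(X\) contains three distinct vertices \(x,y,z\in V(D)\). Since \(X\) is a clique of \(G\), the vertices \(x,y,z\) are pairwise adjacent. All three lie in \(V(D)\), so they are vertices of the induced subgraph \(G[V(D)]\), and hence \(G[V(D)]\) contains the triangle \(xyz\).

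By Observation~\ref{obs:eop-stars}, \(G[V(D)]\) is a disjoint union of induced stars. Every star \(K_{1,t}\) is a tree, so it contains no cycle. The three vertices \(x,y,z\) are pairwise adjacent and therefore lie in the same connected component of \(G[V(D)]\). That component would then contain a triangle, which is impossible for a star. This contradiction shows that \(|X\cap V(D)|\le 2\).

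There is no real obstacle here. The only point to state explicitly is that the triangle \(xyz\) lies inside \(G[V(D)]\) itself and not merely in \(G\). This holds because \(G[V(D)]\) is an induced subgraph, so the edges \(xy,yz,xz\) of \(G\) are all retained. The argument uses no property of the partition tree beyond the fact that every bag is a clique.
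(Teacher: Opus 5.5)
Your proof is correct and follows the paper's argument: three vertices of the clique bag \(X\) lying in \(V(D)\) would induce a triangle in \(G[V(D)]\), contradicting Observation~\ref{obs:eop-stars}. The extra points you spell out, namely that the triangle survives in the induced subgraph and that a star contains no cycle, are the details the paper leaves implicit.
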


\begin{proof}
	Every bag is a clique. If three vertices of \(X\) belonged to \(V(D)\),
	they would induce a triangle in \(G[V(D)]\), whereas
	Observation~\ref{obs:eop-stars} says that \(G[V(D)]\) is a disjoint union
	of induced stars. Hence, at most two vertices of a bag belong to \(V(D)\).
\end{proof}

\begin{observation}\label{obs:eop-induced-edge-selected}
	Let \(D\) be an edge open packing of \(G\). If \(u,v\in V(D)\) and
	\(uv\in E(G)\), then \(uv\in D\).
\end{observation}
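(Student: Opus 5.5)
This follows directly from the first paragraph of the proof of Observation~\ref{obs:eop-stars}. That proof showed that for an edge open packing \(D\), the induced subgraph \(H=G[V(D)]\) satisfies \(E(H)=D\). I would either cite that claim or repeat its short argument.

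Let \(u,v\in V(D)\) with \(uv\in E(G)\), and suppose for contradiction that \(uv\notin D\). Since \(u,v\in V(D)\), there are edges \(e_u,e_v\in D\) incident with \(u\) and \(v\), respectively. These edges are distinct. Indeed, if \(e_u=e_v\), this single edge would have endpoints \(u\) and \(v\) and hence equal \(uv\), contradicting \(uv\notin D\). Moreover, \(uv\notin\{e_u,e_v\}\) because \(uv\notin D\). So \(uv\) is an edge of \(G\setminus\{e_u,e_v\}\) joining an endpoint of \(e_u\) to an endpoint of \(e_v\). That makes it a common edge of two distinct members of \(D\), contradicting that \(D\) is an edge open packing. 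Hence \(uv\in D\).

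There is no real obstacle here. The only point needing care is checking that \(e_u\neq e_v\) and that \(uv\) differs from both, as the definition of a common edge requires. Both checks follow from the assumption \(uv\notin D\).
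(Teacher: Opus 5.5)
Your proof is correct and is essentially the paper's own argument: assume \(uv\notin D\), take selected edges \(e_u,e_v\) at \(u\) and \(v\), show they are distinct, and conclude that \(uv\) is a forbidden common edge. Your explicit check that \(uv\notin\{e_u,e_v\}\) is left implicit in the paper, and your remark that this is the claim \(E(G[V(D)])=D\) from the proof of Observation~\ref{obs:eop-stars} is accurate.
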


\begin{proof}
	Suppose that \(uv\notin D\). Since \(u,v\in V(D)\), there are selected
	edges \(e_u,e_v\in D\) incident with \(u\) and \(v\), respectively. The
	two edges are distinct: if \(e_u=e_v\), then this common edge has endpoints
	\(u\) and \(v\), and hence equals \(uv\), contrary to \(uv\notin D\).
	Therefore, \(uv\) is a third edge joining an endpoint of \(e_u\) to an
	endpoint of \(e_v\), contradicting the definition of an edge open packing.
\end{proof}

It is convenient to orient every star component of \(G[V(D)]\) from its
centre towards its leaves. A \(K_2\)-component has two possible orientations,
and either one may be used. The dynamic program stores these two possibilities
separately. This convention removes any ambiguity when a \(K_2\)-component is
extended through an unprocessed bag.

\subsection{Boundary states and their semantics}

Fix a bag \(X\), and order its children as \(Y_1,\ldots,Y_k\). For
\(0\leq i\leq k\), let \(G_{X,i}\) be the subgraph induced by \(X\) together
with the bags in \(T_{Y_1},\ldots,T_{Y_i}\). In particular,
\(G_{X,0}=G[X]\) and \(G_{X,k}=G_X\).

A partial configuration on \(G_{X,i}\) is a pair \((D,Q)\) with the following
properties. First, \(D\subseteq E(G_{X,i})\) is an edge open packing of
\(G_{X,i}\). Second, \(Q\subseteq X\setminus V(D)\) contains at most one
pending vertex. A pending vertex is reserved to become an endpoint of a
selected edge in an unprocessed child subtree or on the parent side of \(X\).
Accordingly, a pending vertex has no neighbour in \(V(D)\). Finally, every
star component of \(G_{X,i}[V(D)]\) is oriented from its centre to its leaves.

The restriction to at most one pending vertex entails no loss of generality.
Indeed, if two vertices \(x,y\in X\) belong to the endpoint set of a completed
solution, then \(xy\in E(G)\), and
Observation~\ref{obs:eop-induced-edge-selected} forces \(xy\) to be selected.
Thus, two vertices of the same bag cannot both remain pending.

The state of a partial configuration records its interaction with \(X\). The
possible states are as follows.

\begin{enumerate}[label=\textnormal{(\roman*)}]
	\item The empty state \(\varnothing\), in which
	\(X\cap V(D)=Q=\varnothing\).

	\item For every \(x\in X\), the pending state \(P_x\), in which
	\(Q=\{x\}\) and \(X\cap V(D)=\varnothing\).

	\item For every \(x\in X\), the singleton states \(C_x\) and \(L_x\). In
	these states, \(Q=\varnothing\), \(X\cap V(D)=\{x\}\), and \(x\) is,
	respectively, the centre or a leaf of its oriented star component.

	\item For every unordered pair \(\{x,y\}\subseteq X\), the pair states
	\(C_{x\mid y}\) and \(C_{y\mid x}\). The first state means that
	\(X\cap V(D)=\{x,y\}\), the edge \(xy\) is selected, \(x\) is the centre,
	and \(y\) is a leaf. The second state has the symmetric meaning.
\end{enumerate}

For a state \(\sigma\), let \(B(\sigma)\) be the set of vertices of \(X\)
mentioned by the state. Thus, \(B(P_x)=B(C_x)=B(L_x)=\{x\}\),
\(B(C_{x\mid y})=B(C_{y\mid x})=\{x,y\}\), and
\(B(\varnothing)=\varnothing\). Notice that a vertex in \(B(\sigma)\) is an
endpoint of a selected edge unless \(\sigma=P_x\).

Let \(\DP_{X,i}[\sigma]\) be the maximum value of \(|D|\) over all partial
configurations on \(G_{X,i}\) having state \(\sigma\). If no such
configuration exists, its value is \(-\infty\). After all children have been
processed, we write \(\DP_X[\sigma]=\DP_{X,k}[\sigma]\).

\subsection{Initialization}

The table for \(G_{X,0}=G[X]\) is initialized as follows. We set
\(\DP_{X,0}[\varnothing]=0\) and \(\DP_{X,0}[P_x]=0\) for every \(x\in X\).
For each unordered pair \(\{x,y\}\subseteq X\), the edge \(xy\) is present
because \(X\) is a clique. We store both possible orientations by setting
\(\DP_{X,0}[C_{x\mid y}]=\DP_{X,0}[C_{y\mid x}]=1\). All singleton states
\(C_x,L_x\) are initialized to \(-\infty\), because no selected edge is
incident with a single boundary vertex when only the bag \(X\) has been
processed.

\subsection{Constant-size state representatives}

We next define a representative \(R(\sigma)\) that contains exactly the
information relevant to future merges. The vertices of \(B(\sigma)\) are
called boundary vertices. Some additional vertices, called private markers,
encode selected neighbours that have already been forgotten.

The representative of \(\varnothing\) is empty, while \(R(P_x)\) consists of
the isolated pending boundary vertex \(x\). For \(C_x\), introduce a private
marker \(x^{\ell}\), add the edge \(xx^{\ell}\), and prescribe \(x\) as its
centre and \(x^{\ell}\) as a leaf. For \(L_x\), introduce a private marker
\(x^{c}\), add the edge \(x^{c}x\), and prescribe \(x^{c}\) as the centre and
\(x\) as a leaf. Finally, \(R(C_{x\mid y})\) consists of the edge \(xy\), with
\(x\) prescribed as centre and \(y\) as leaf; \(R(C_{y\mid x})\) is defined
symmetrically.

The private markers are not vertices of \(G\), and their edges are not counted
by the dynamic program. They only record whether a boundary vertex is already
a centre or is already a leaf. A centre may receive further leaves, whereas a
leaf cannot receive another selected neighbour.

\subsection{Combining a child}

Suppose that \(Y=Y_i\) is the next child of \(X\). Let \(\sigma\) be a state
of the current table \(\DP_{X,i-1}\), and let \(\tau\) be a state of
\(\DP_Y\). Since the only edges between the two processed regions are the
edges between \(X\) and \(Y\), the cross edges forced by these states are
\(F_{\sigma,\tau}=E_G(B(\sigma),B(\tau))\). Equivalently,
\(F_{\sigma,\tau}\) is the complete bipartite graph between
\(B(\sigma)\cap\operatorname{bd}(X,Y)\) and
\(B(\tau)\cap\operatorname{bd}(Y,X)\). These edges are forced by
Observation~\ref{obs:eop-induced-edge-selected}: every vertex mentioned by a
state is intended to be an endpoint in the completed solution, including a
pending vertex once it is joined at this merge.

Form a constant-size graph \(R(\sigma,\tau)\) from the disjoint union of
\(R(\sigma)\) and \(R(\tau)\) by adding all edges of
\(F_{\sigma,\tau}\). The boundary vertices keep their actual names; all
private markers are distinct. We say that \((\sigma,\tau)\) is compatible if
the following conditions hold.

\begin{enumerate}[label=\textnormal{(\roman*)}]
	\item Every pending boundary vertex belonging to \(Y\) is incident with a
	new cross edge. Otherwise, it would be forgotten without ever becoming an
	endpoint of a selected edge.

	\item A pending boundary vertex belonging to \(X\) may remain isolated, in
	which case it stays pending. If it is incident with a cross edge, it ceases
	to be pending.

	\item After deleting a possible isolated pending vertex of \(X\), every
	connected component of \(R(\sigma,\tau)\) is a star.

	\item Each star component admits a choice of centre consistent with every
	role prescribed in \(R(\sigma)\) and \(R(\tau)\). A component with at least
	three vertices has its unique vertex of degree greater than one as centre.
	For a \(K_2\)-component, either endpoint may be its centre. A prescribed
	centre must be the chosen centre, and a prescribed leaf must be different
	from it.
\end{enumerate}

A merge is \emph{reservation-preserving}: its selected edge set is the union
of the two selected edge sets and \(F_{\sigma,\tau}\), while its pending set
consists of the possible pending vertex of \(X\) if that vertex remains
isolated. Every pending vertex in the child bag \(Y\), which is forgotten
after this step, must therefore become incident with a cross edge. Conditions
(i) and (ii) enforce precisely this convention.

This is an explicit constant-time test: the representative has at most four
boundary vertices and at most two private markers. In particular, it checks
all forced edges, rejects triangles and nonstar paths, and preserves both
orientations of every newly created \(K_2\)-component.

For every compatible pair, each allowed choice of centres determines a state
on \(X\). If the retained boundary is empty, the projected state is
\(\varnothing\). An isolated pending vertex \(x\in X\) gives \(P_x\). One
active vertex \(x\in X\) gives \(C_x\) or \(L_x\) according to its role. Two
active vertices \(x,y\in X\) give \(C_{x\mid y}\) or \(C_{y\mid x}\). Let
\(\Pi_X(\sigma,\tau)\) denote the set of all projected states obtained in this
way. Its size is bounded by a constant.

The new table is first initialized to \(-\infty\). For every
\(\sigma'\in\Pi_X(\sigma,\tau)\), we replace \(\DP_{X,i}[\sigma']\) by the
maximum of its current value and
\(\DP_{X,i-1}[\sigma]+\DP_Y[\tau]+|F_{\sigma,\tau}|\). The last term counts
exactly the newly selected edges between \(X\) and \(Y\).

The following interface lemma explains why the representative test contains
all information needed by the transition.

\begin{lemma}\label{lem:wpcg-interface}
	Let partial configurations realizing \(\sigma\) and \(\tau\) be given on
	\(G_{X,i-1}\) and \(G_Y\), respectively. Their reservation-preserving union,
	together with the forced edge set \(F_{\sigma,\tau}\), is a valid partial
	configuration if and only if \((\sigma,\tau)\) passes the representative
	compatibility test. In that case, its state on \(X\) belongs to
	\(\Pi_X(\sigma,\tau)\).
\end{lemma}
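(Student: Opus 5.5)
We plan to compare the actual merged graph \(G_{X,i}[V(D)]\), with \(D=D_1\cup D_2\cup F_{\sigma,\tau}\), against the representative \(R(\sigma,\tau)\). Here \((D_1,Q_1)\) realizes \(\sigma\) on \(G_{X,i-1}\) and \((D_2,Q_2)\) realizes \(\tau\) on \(G_Y\). The first step is to locate the new edges. The only edges of \(G_{X,i}\) between \(V(G_{X,i-1})\) and \(V(G_Y)\) lie between \(X\) and \(Y\), since \(T\) is a tree and nonadjacent bags are anticomplete. A vertex of \(V(D_1)\cup Q_1\) in \(X\) lies in \(B(\sigma)\), and likewise for \(Y\) and \(\tau\). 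Hence the edges of \(G_{X,i}[V(D)]\) are exactly \(D_1\), \(D_2\) (by Observation~\ref{obs:eop-induced-edge-selected} applied to each side), and \(F_{\sigma,\tau}\). In particular, every vertex of \(V(D_1)\setminus X\) and \(V(D_2)\setminus Y\) keeps its old neighbourhood in the selected graph. Pending vertices gain neighbours only through \(F_{\sigma,\tau}\), and a pending vertex has no neighbour in \(V(D_1)\) or \(V(D_2)\) by definition.

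The second step is a component correspondence. Components of \(G_{X,i-1}[V(D_1)]\) avoiding \(X\), and components of \(G_Y[V(D_2)]\) avoiding \(Y\), survive unchanged and are already oriented stars. Every other component meets \(B(\sigma)\cup B(\tau)\). By Observation~\ref{obs:wpcg-bag-two} and the definition of states, such a component meets \(X\) in the star recorded by \(\sigma\): either one vertex, with its role as centre or leaf, or the edge \(xy\) with a prescribed centre. The rest of that component consists of leaves attached to a boundary centre, or of one centre attached to a boundary leaf. We replace these forgotten parts by the private markers and do the same on the \(Y\) side. The merged components are then obtained from \(R(\sigma,\tau)\) by adding \(F_{\sigma,\tau}\) and reinflating the markers.

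The key claim is that the reinflated component is an induced star with an orientation extending the old ones if and only if the representative component is a star admitting a centre choice consistent with the prescribed roles, that is, condition~(iv). The only subtle cases are the following.
\begin{enumerate}[label=\textnormal{(\alph*)}]
\item A boundary centre \(x\) with several forgotten leaves. Here \(x\) is represented by one marker leaf, and adding more leaves to a centre is harmless. The representative captures this exactly: \(x\) must stay the centre, so any cross edge at \(x\) must go to a vertex that becomes a leaf.
\item A boundary leaf \(x\) with a forgotten centre. Adding any cross edge at \(x\) gives a path \(x^c x z\), which is not a star with \(x^c\) as centre. This is rejected both in reality and in \(R\).
\item A \(K_2\)-component, whose two orientations are stored as separate states.
\end{enumerate}
Conditions~(i) and~(ii) correspond exactly to the rule that a pending vertex of \(Y\), which is forgotten at this step, must be consumed, while a pending vertex of \(X\) may remain pending. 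Inducedness is automatic, because all edges among \(V(D)\) were identified in the first step.

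Finally, for the state claim, we read off the projection. Every vertex of \(X\) in \(V(D)\) or \(Q\) is a boundary vertex of \(\sigma\). Its role in the merged oriented star is the role fixed by the chosen centre in \(R(\sigma,\tau)\), so the state of the merged configuration is one of the projections in \(\Pi_X(\sigma,\tau)\). The main obstacle is stating the correspondence rigorously when a single component involves boundary vertices from both \(X\) and \(Y\) as well as markers from both sides. I would handle it by showing that the map that collapses forgotten leaves of a centre into one marker, and a forgotten centre into one marker, preserves and reflects the property ``oriented induced star extending prescribed roles''. I would check this over the constantly many local shapes: at most four boundary vertices and two markers.
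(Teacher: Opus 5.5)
Your proposal is correct and follows essentially the same route as the paper. Both arguments show that the only new edges among active vertices are those of \(F_{\sigma,\tau}\), then argue that replacing the forgotten part of each boundary star by a private marker (or, for pair states, by the prescribed roles alone) preserves and reflects every possible violation, and finally read conditions (i)--(iv) and the projected state off the constant-size representative; your version is more explicit in verifying, via Observation~\ref{obs:eop-induced-edge-selected} and the pending-vertex definition, that the edge set of \(G_{X,i}[V(D)]\) is exactly \(D_1\cup D_2\cup F_{\sigma,\tau}\).
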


\begin{proof}
	The partition-tree definition implies that vertices strictly inside two
	different child subtrees are nonadjacent. Moreover, a vertex of \(X\) has
	neighbours in \(G_Y\) only in the bag \(Y\). Consequently, every new graph
	edge whose endpoints are active, or become active in this merge, has one
	endpoint in \(B(\sigma)\) and the other in \(B(\tau)\), and every such edge
	is included in \(F_{\sigma,\tau}\).

	Inside either processed region, all components are already oriented stars.
	A boundary centre may be incident with any number of forgotten leaves, and
	a boundary leaf already has exactly one selected neighbour. The private
	markers in singleton states, together with the prescribed roles in pair
	states, record precisely these possibilities. Thus, replacing the
	forgotten part of a boundary star by its marker changes neither the ways in
	which the boundary vertex may be extended nor the validity of any new cross
	edge.

	It follows that a new triangle, a nonstar path, two joined centres, or a
	leaf receiving a second selected neighbour occurs in the actual union if
	and only if it occurs in the representative. Conditions (i) and (ii) are
	exactly the requirements for pending vertices that are forgotten or
	retained. Conditions (iii) and (iv) are exactly the induced-star and role
	requirements. The final choice of roles on the retained vertices is, by
	definition, one of the states in \(\Pi_X(\sigma,\tau)\).
\end{proof}

\subsection{Correctness}

\begin{lemma}\label{lem:wpcg-dp-sound}
	Every finite entry \(\DP_{X,i}[\sigma]\) is realized by a partial
	configuration of the stated value and state.
\end{lemma}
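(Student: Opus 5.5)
We argue by induction on the order in which the tables are filled: first over the bags of \(T\) in post-order, and inside a bag \(X\) over the index \(i=0,1,\ldots,k\). The invariant to maintain is the statement itself. Every finite value \(\DP_{X,i}[\sigma]\) equals \(|D|\) for some partial configuration \((D,Q)\) on \(G_{X,i}\) whose state is \(\sigma\). We also carry along the orientation of every star component.

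\textbf{Base case \(i=0\).} Only \(G[X]\) is involved, and we check each initialized entry. For \(\varnothing\) we take \(D=\emptyset\) and \(Q=\emptyset\). For \(P_x\) we take \(D=\emptyset\) and \(Q=\{x\}\); the pending vertex then trivially has no neighbour in \(V(D)=\emptyset\). For \(C_{x\mid y}\) we take \(D=\{xy\}\), oriented from \(x\) to \(y\), and \(Q=\emptyset\). A single edge is an edge open packing, and \(X\cap V(D)=\{x,y\}\) as required. The singleton states equal \(-\infty\) and need nothing.

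\textbf{Inductive step \(i\ge 1\).} A finite entry \(\DP_{X,i}[\sigma']\) is, by the update rule, equal to \(\DP_{X,i-1}[\sigma]+\DP_Y[\tau]+|F_{\sigma,\tau}|\) for some compatible pair \((\sigma,\tau)\) with \(\sigma'\in\Pi_X(\sigma,\tau)\), where \(Y=Y_i\). Both summands from the tables are finite. By the induction hypothesis (applied to \(G_{X,i-1}\), and to \(G_Y=G_{Y,k_Y}\), which was completed earlier in post-order) we get realizing configurations \((D_1,Q_1)\) on \(G_{X,i-1}\) and \((D_2,Q_2)\) on \(G_Y\). Form \(D=D_1\cup D_2\cup F_{\sigma,\tau}\). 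The pending set \(Q\) is \(Q_1\) if that vertex stays isolated, and \(\emptyset\) otherwise. Lemma~\ref{lem:wpcg-interface} then says that \((D,Q)\) is a valid partial configuration on \(G_{X,i}\) whose state lies in \(\Pi_X(\sigma,\tau)\). We pick the role choice that produced \(\sigma'\) and orient the merged star components accordingly.

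\textbf{Cardinality.} The three edge sets are pairwise disjoint. The sets \(D_1\) and \(D_2\) lie in vertex-disjoint induced subgraphs, since \(G_{X,i-1}\) and \(G_Y\) share no bag. Every edge of \(F_{\sigma,\tau}\) joins \(X\) to \(Y\), so it lies in neither. Hence \(|D|\) equals the stored value.

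\textbf{Main obstacle.} The delicate point is that Lemma~\ref{lem:wpcg-interface} applies not only to the chosen pair but also yields the exact projected state \(\sigma'\), not merely some state in \(\Pi_X\). To handle this, we argue that each allowed choice of centres in the representative lifts to an orientation of the actual union. The forgotten leaves attached to a boundary centre, and the forgotten centre attached to a boundary leaf, keep their roles, because the private markers fix those roles. A newly created \(K_2\)-component can be oriented either way.

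We also check that \(Q\) satisfies the pending-vertex condition. A retained pending vertex \(x\in X\) is isolated in \(R(\sigma,\tau)\), so it has no cross edge. By the induction hypothesis it also has no neighbour in \(V(D_1)\), and it has no neighbour in \(V(D_2)\) outside \(B(\tau)\), because its only neighbours in \(G_Y\) lie in \(Y\) and \(V(D_2)\cap Y=B(\tau)\) (minus any pending vertex of \(Y\)). Finally, pending vertices of \(Y\) are consumed by condition (i), so none is left dangling.
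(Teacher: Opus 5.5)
Your proposal is correct and follows the same route as the paper. It uses a double induction over the post-order of \(T\) and the number of processed children, reads the base case off the initialization, and obtains the inductive step from Lemma~\ref{lem:wpcg-interface} together with the count \(|F_{\sigma,\tau}|\) of new cross edges. Your additional checks (disjointness of \(D_1\), \(D_2\), \(F_{\sigma,\tau}\); lifting the chosen centres to realize exactly \(\sigma'\) rather than just some state in \(\Pi_X(\sigma,\tau)\); and the pending-vertex condition) make explicit what the paper leaves implicit in the interface lemma.
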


\begin{proof}
	We use induction on the postorder of \(T\), and, for a fixed bag, on the
	number of children already processed. The assertion holds for
	\(\DP_{X,0}\) by the initialization: the empty and pending configurations
	contain no selected edge, while every pair state consists of the single
	selected edge inside \(X\) under one of its two orientations.

	Assume that the assertion holds for \(\DP_{X,i-1}\) and for the final table
	of \(Y_i\). Every transition accepted by the algorithm passes the
	representative test. Lemma~\ref{lem:wpcg-interface} therefore shows that
	the two realizing configurations, together with the forced cross edges,
	form a valid partial configuration with the projected state. The update
	adds exactly the number of new cross edges. Taking the maximum preserves a
	realizing configuration for every finite entry. This completes both
	inductions.
\end{proof}

\begin{lemma}\label{lem:wpcg-dp-complete}
	Let \(\widehat D\) be an edge open packing of \(G\), and orient each
	component of \(G[V(\widehat D)]\) as a star. For every bag \(X\) and every
	\(i\), define
	\(D_{X,i}=\widehat D\cap E(G_{X,i})\) and
	\(Q_{X,i}=(X\cap V(\widehat D))\setminus V(D_{X,i})\). Then
	\(\DP_{X,i}\) contains the state of \((D_{X,i},Q_{X,i})\) with value at
	least \(|D_{X,i}|\).
\end{lemma}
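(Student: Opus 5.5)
We argue by induction in the same order as the dynamic program: over the postorder of \(T\), and, for a fixed bag \(X\), over the number \(i\) of children already processed. Throughout, we keep the orientation of every component of \(G[V(\widehat D)]\) fixed. We also orient each component of \(G_{X,i}[V(D_{X,i})]\) by restriction. This is legitimate because \(G_{X,i}[V(D_{X,i})]\) is an induced subgraph of \(G[V(\widehat D)]\). Hence each of its components is a substar of an oriented star, with centres and leaves inherited; a \(K_2\) piece inherits the orientation of its parent star.

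First we check that \((D_{X,i},Q_{X,i})\) is a partial configuration with a well-defined state. \(D_{X,i}\) is an edge open packing of \(G_{X,i}\), because any common edge inside \(G_{X,i}\) would also be a common edge in \(G\). By Observation~\ref{obs:wpcg-bag-two}, \(|X\cap V(\widehat D)|\le 2\). If \(Q_{X,i}\) contained two vertices \(x,y\), then Observation~\ref{obs:eop-induced-edge-selected} would force \(xy\in\widehat D\cap E(G[X])\subseteq D_{X,i}\), a contradiction. Similarly, if \(x\in Q_{X,i}\) and \(y\in X\cap V(D_{X,i})\), then \(xy\in\widehat D\) is again forced, so \(x\in V(D_{X,i})\), a contradiction. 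A pending vertex also has no neighbour in \(V(D_{X,i})\): any such neighbour \(z\) would give \(xz\in\widehat D\) with both endpoints in \(G_{X,i}\). Therefore the state is one of \(\varnothing,P_x,C_x,L_x,C_{x\mid y}\), read off from the inherited roles.

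For the base case \(i=0\), the set \(D_{X,0}\) consists of at most one edge \(xy\) inside \(X\), or is empty. The initialization covers \(\varnothing\), \(P_x\) and both orientations of \(C_{x\mid y}\) with the correct value. The case of a single active vertex cannot arise with \(D_{X,0}\) nonempty unless both endpoints lie in \(X\). If \(D_{X,0}=\emptyset\) and \(|X\cap V(\widehat D)|=2\), Observation~\ref{obs:eop-induced-edge-selected} would force that edge into \(D_{X,0}\). So the state is \(\varnothing\) or \(P_x\).

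For the inductive step with \(Y=Y_i\), let \(\sigma\) be the state of \((D_{X,i-1},Q_{X,i-1})\) and \(\tau\) the state of \((D_{Y,k_Y},Q_{Y,k_Y})\). By induction, these states appear with values at least \(|D_{X,i-1}|\) and \(|D_Y|\). We then show three things.
\begin{enumerate}
\item[(a)] The edges of \(D_{X,i}\) not in \(D_{X,i-1}\cup D_Y\) are exactly \(F_{\sigma,\tau}\). Every such edge runs between \(X\) and \(Y\) with both endpoints in \(V(\widehat D)\), hence between \(B(\sigma)\) and \(B(\tau)\). Conversely, every edge of \(F_{\sigma,\tau}\) lies in \(\widehat D\) by Observation~\ref{obs:eop-induced-edge-selected}, because every vertex of \(B(\sigma)\cup B(\tau)\) lies in \(V(\widehat D)\). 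Consequently \(|D_{X,i}|=|D_{X,i-1}|+|D_Y|+|F_{\sigma,\tau}|\).
\item[(b)] \((\sigma,\tau)\) passes the compatibility test. Condition (i) holds because a pending vertex of \(Y\) belongs to \(V(\widehat D)\) but has no selected edge inside \(G_Y\). Its selected edge cannot go to a different child of \(Y\), which would be inside \(G_Y\), nor to another child of \(X\), since there are no edges there. It must therefore go to \(X\), and that edge lies in \(F_{\sigma,\tau}\). Conditions (ii)–(iv) follow from the "if" direction of Lemma~\ref{lem:wpcg-interface}, read in reverse. The actual union is a valid partial configuration carrying the inherited orientation, so the representative admits the inherited choice of centres.
\item[(c)] The state of \((D_{X,i},Q_{X,i})\) is the projection associated with that choice of centres, and hence lies in \(\Pi_X(\sigma,\tau)\).
\end{enumerate}
The update rule then gives an entry of value at least \(|D_{X,i}|\).

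The main obstacle is step (b), and specifically condition (i) together with the claim that the representative faithfully reflects the true union. The worry is that a pending vertex of \(Y\) might have its selected partner on the parent side of \(Y\) rather than in \(X\). The partition-tree adjacency structure settles this: the parent of \(Y\) is \(X\), so every neighbour of \(Y\) outside \(G_Y\) lies in \(X\). We will need to state this explicitly. For the faithfulness of the representative, we will rely on Lemma~\ref{lem:wpcg-interface} as an equivalence, which covers both directions.
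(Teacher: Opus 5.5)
Your proposal is correct and follows essentially the same route as the paper. It uses the same double induction (postorder, then $i$), the same check that the trace $(D_{X,i},Q_{X,i})$ has a legal state, the same identification of the cross edges of $\widehat D$ with $F_{\sigma,\tau}$ via $B(\sigma)=X\cap V(\widehat D)$ and $B(\tau)=Y\cap V(\widehat D)$, and the same use of the inherited orientation to witness that the representative test accepts and projects to the right state. Your explicit verification of condition (i) through the partition-tree adjacency, and of the count $|D_{X,i}|=|D_{X,i-1}|+|D_Y|+|F_{\sigma,\tau}|$, fills in details the paper leaves implicit; note only that the direction of Lemma~\ref{lem:wpcg-interface} you need is the ``only if'' one (valid union implies the test passes), not the ``if'' direction.
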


\begin{proof}
	Restricting an oriented star to a subset of its edges again gives an
	oriented star. Hence, \(D_{X,i}\)
	is an edge open packing of \(G_{X,i}\).

	If \(Q_{X,i}\) contained two vertices \(x,y\), then
	\(x,y\in X\cap V(\widehat D)\), and
	Observation~\ref{obs:eop-induced-edge-selected} would imply
	\(xy\in D_{X,i}\), a contradiction. The same argument shows that a pending
	vertex cannot coexist with an active endpoint in \(X\). Moreover, a pending
	vertex has no neighbour in \(V(D_{X,i})\), since every such edge would be
	forced into \(D_{X,i}\). Thus, \((D_{X,i},Q_{X,i})\) has one of the states
	defined above.

	We prove the table assertion simultaneously by induction on the postorder
	of \(T\) and, for each bag, on \(i\). When \(i=0\), the trace contains only
	edges with both endpoints in the clique \(X\). If the global endpoint set
	meets \(X\) in no vertex, the trace has state \(\varnothing\). If it meets
	\(X\) in one vertex, that vertex is pending and the trace has state \(P_x\).
	If it meets \(X\) in two vertices, their edge belongs to \(\widehat D\) by
	Observation~\ref{obs:eop-induced-edge-selected}, and the trace has one of the
	two pair states. Observation~\ref{obs:wpcg-bag-two} shows that these are the
	only cases, and they are exactly the finite entries created by the
	initialization.

	Now let \(i>0\), and put \(Y=Y_i\). By the induction on \(i\), the trace on
	\(G_{X,i-1}\) occurs in \(\DP_{X,i-1}\). By the postorder induction, the
	trace of the same global packing on \(G_Y\), including any vertex of \(Y\)
	whose selected incident edges lie outside \(G_Y\) as pending, occurs in
	\(\DP_Y\). The boundary sets of these two states are precisely
	\(X\cap V(\widehat D)\) and \(Y\cap V(\widehat D)\). Consequently, all edges
	of \(\widehat D\) between the two regions are exactly the forced cross edges
	\(F_{\sigma,\tau}\): every such cross edge is included, and every edge of
	\(F_{\sigma,\tau}\) belongs to \(\widehat D\) by
	Observation~\ref{obs:eop-induced-edge-selected}. The orientation inherited
	from \(\widehat D\) therefore witnesses that the representative test accepts
	these states and projects to the state of
	\((D_{X,i},Q_{X,i})\). The update gives that state value at least
	\(|D_{X,i}|\), completing both inductions.
\end{proof}

At the root \(R\), no pending vertex can be completed outside the processed
graph. Therefore, the accepting states are \(\varnothing\), every singleton
state \(C_x,L_x\), and every pair state \(C_{x\mid y},C_{y\mid x}\). Pending
states are rejected.

\begin{theorem}\label{thm:wpcg-dp-correct}
	The maximum value of an accepting state in the table of the root is
	\(\rho_e^o(G)\).
\end{theorem}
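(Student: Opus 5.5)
The plan is to prove the two inequalities separately, using Lemma~\ref{lem:wpcg-dp-sound} for the upper bound and Lemma~\ref{lem:wpcg-dp-complete} for the lower bound. Let \(X=R\) be the root, \(k\) its number of children, and write \(\DP_R=\DP_{R,k}\). The tables are indexed by \(G_{R,k}=G_R=G\), assuming \(G\) is connected. For a disconnected graph, I would run the program on each component's partition tree and add the root values. This is justified because an edge open packing of a disjoint union is exactly a union of edge open packings of the components, by Observation~\ref{obs:eop-stars}.

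\emph{Upper bound.} Let \(\sigma\) be an accepting state with finite value \(\DP_R[\sigma]\). By Lemma~\ref{lem:wpcg-dp-sound}, there is a partial configuration \((D,Q)\) on \(G_R=G\) with state \(\sigma\) and \(|D|=\DP_R[\sigma]\). By definition of a partial configuration, \(D\) is an edge open packing of \(G\). The pending set \(Q\) plays no role in this validity, and it is empty anyway since \(\sigma\) is not a pending state. Hence \(\DP_R[\sigma]\le\rho_e^o(G)\). If no accepting state is finite, this direction is vacuous. In fact \(\varnothing\) is always finite: I would argue this either via the completeness lemma applied to the empty packing, or by a direct induction showing that \(\varnothing\) merged with \(\varnothing\) is compatible at every step.

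\emph{Lower bound.} Let \(\widehat D\) be a maximum edge open packing, and orient its star components. Apply Lemma~\ref{lem:wpcg-dp-complete} with \(X=R\) and \(i=k\). Then \(D_{R,k}=\widehat D\cap E(G)=\widehat D\), and \(\DP_R\) contains the state of \((\widehat D,Q_{R,k})\) with value at least \(|\widehat D|\).

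The remaining point is that this state is accepting, that is, \(Q_{R,k}=\varnothing\). By definition \(Q_{R,k}=(R\cap V(\widehat D))\setminus V(\widehat D)=\varnothing\), so the state is \(\varnothing\), a singleton state, or a pair state, all of which are accepting. Hence the maximum over accepting states is at least \(\rho_e^o(G)\).

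The main subtlety I expect is this last step: confirming that the completeness lemma's trace at the root never produces a pending state. Once we note that \(G_{R,k}=G\), this is immediate from the definition of \(Q_{R,k}\). Combining the two bounds gives equality.
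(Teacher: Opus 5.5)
Your proof is correct and is essentially the paper's argument: the upper bound comes from Lemma~\ref{lem:wpcg-dp-sound} applied at the root (where \(G_{R,k}=G\)), and the lower bound comes from Lemma~\ref{lem:wpcg-dp-complete} with \(X=R\) and \(i=k\), where \(D_{R,k}=\widehat D\) forces \(Q_{R,k}=\varnothing\), so the traced state is accepting. Your remarks on disconnected graphs and on the finiteness of \(\DP_R[\varnothing]\) are harmless but not needed, since the theorem is stated for connected \(G\) and the lower bound already guarantees a finite accepting entry.
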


\begin{proof}
	By Lemma~\ref{lem:wpcg-dp-sound}, every accepting root entry is realized by
	an edge open packing of \(G\), so its value is at most \(\rho_e^o(G)\). Let
	\(D\) be a maximum edge open packing of \(G\). At the root, the pending set
	in Lemma~\ref{lem:wpcg-dp-complete} is empty. Hence, an accepting root state
	has value at least \(|D|=\rho_e^o(G)\). The two inequalities prove the claim.
\end{proof}

\subsection{Running time and reconstruction}

Let \(s_X=|X|\). The number of states on \(X\) is
\(1+3s_X+2\binom{s_X}{2}=O(s_X^2)\). For a tree edge \(XY\), the algorithm
examines \(O(s_X^2s_Y^2)\) pairs of states. Boundary-set membership is
preprocessed, so all forced cross edges and the representative graph are
constructed in constant time. Since the representative has constant size,
compatibility testing and projection also take constant time.

Consequently, the total time for all merges is
\(O(\sum_{XY\in E(T)}s_X^2s_Y^2)\). Since the bags partition \(V(G)\), we have
\(\sum_Xs_X=n\) and \(\sum_Xs_X^2\leq n^2\). The sum over the tree edges is at
most the corresponding sum over all unordered pairs of bags and is therefore
at most \((\sum_Xs_X^2)^2\leq n^4\). Initialization and preprocessing take
\(O(n^2)\) time and are dominated by the merge time.

For a disconnected graph, the algorithm is applied to the partition tree of
each connected component and the answers are added. If the component orders
are \(n_1,\ldots,n_t\), then \(\sum_j n_j^4\leq n^4\).

\begin{theorem}\label{thm:eop-wpcg-poly}
	Given a partition tree for every connected component of a well-partitioned
	chordal graph \(G\), its edge open packing number can be computed in
	\(O(n^4)\) time. A maximum edge open packing can
	be produced within the same time bound.
\end{theorem}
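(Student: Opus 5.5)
The value claim follows from Theorem~\ref{thm:wpcg-dp-correct} combined with the running-time analysis just given. I would run the dynamic program bottom-up in postorder of the rooted partition tree of each component, and at the root take the maximum over the accepting states. For a disconnected graph, I would add the values over the components. The time bound is then exactly the computation above. The number of states per bag is \(O(s_X^2)\), so each merge along a tree edge \(XY\) costs \(O(s_X^2s_Y^2)\), and summing over the tree edges gives at most \(\bigl(\sum_X s_X^2\bigr)^2\le n^4\). Preprocessing, meaning boundary-membership tables and an adjacency matrix, takes \(O(n^2)\). Across components, \(\sum_j n_j^4\le n^4\). So the first sentence of the theorem needs nothing new.

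For reconstruction, I would use standard backpointers. Whenever an update raises \(\DP_{X,i}[\sigma']\), I record the triple \((\sigma,\tau,\text{choice of centres})\) that produced the new value. This is one constant-size record per table entry, overwritten only when the value strictly improves, so it adds at most a constant factor to the merge time. I would then trace back from an optimal accepting root state. At entry \(\DP_{X,i}[\sigma']\), the stored pointer yields a state \(\sigma\) of \(\DP_{X,i-1}\) and a state \(\tau\) of \(\DP_Y\) for \(Y=Y_i\). I output the forced cross edges \(F_{\sigma,\tau}\), which can be listed in time \(O(1)\) since \(|B(\sigma)|,|B(\tau)|\le 2\). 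I then recurse on \(\DP_{X,i-1}[\sigma]\) and on \(\DP_Y[\tau]\). At \(i=0\), I output the edge \(xy\) if the state is \(C_{x\mid y}\) or \(C_{y\mid x}\), and nothing otherwise. Each table entry \(\DP_{X,i}[\cdot]\) is visited at most once per traceback, along a single path for each pair \((X,i)\). The traceback therefore visits \(O(\sum_X(k_X+1))=O(n)\) entries and outputs \(O(n)\) edges, which is well within \(O(n^4)\).

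The step that needs real care is correctness of the reconstructed set: the edges collected must form an edge open packing whose size equals the optimal value. I would argue this by induction along the traceback, mirroring Lemma~\ref{lem:wpcg-dp-sound}. The claim is that the edges output from entry \(\DP_{X,i}[\sigma']\) form a partial configuration on \(G_{X,i}\) with state \(\sigma'\) and exactly \(\DP_{X,i}[\sigma']\) edges. The base case is the initialization. For the inductive step, the recorded pair \((\sigma,\tau)\) passed the representative test. By Lemma~\ref{lem:wpcg-interface}, the union of the inductively reconstructed configurations with \(F_{\sigma,\tau}\) is then a valid partial configuration with state \(\sigma'\). The edge counts add up as \(\DP_{X,i-1}[\sigma]+\DP_Y[\tau]+|F_{\sigma,\tau}|\), because the three edge sets live in disjoint parts of the graph: inside \(G_{X,i-1}\), inside \(G_Y\), and between \(X\) and \(Y\).

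One subtlety is that a pointer stored for \(\sigma'\) must reference the values of \(\sigma\) and \(\tau\) that are final at the time of use. This holds because \(\DP_{X,i-1}\) and \(\DP_Y\) are completed before \(\DP_{X,i}\) is computed. At the root, an accepting state has no pending vertex, so the reconstructed set is an edge open packing of \(G\) of size \(\rho_e^o(G)\) by Theorem~\ref{thm:wpcg-dp-correct}.
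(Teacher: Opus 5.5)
Your proposal is correct and takes the same route as the paper. Correctness comes from Theorem~\ref{thm:wpcg-dp-correct}, the time bound from $\sum_{XY\in E(T)} s_X^2 s_Y^2\le\bigl(\sum_X s_X^2\bigr)^2\le n^4$ with $\sum_j n_j^4\le n^4$ across components, and your backpointer traceback (justified by the induction of Lemma~\ref{lem:wpcg-dp-sound} via Lemma~\ref{lem:wpcg-interface}) makes explicit the reconstruction step that the paper's proof only asserts.
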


\begin{proof}
	Correctness follows from Theorem~\ref{thm:wpcg-dp-correct}. The preceding
	analysis gives the claimed time bounds, including reconstruction
	of an optimal edge set.
\end{proof}

\subsection{Example}

Let \(K=\{x,y\}\) be a clique, let \(S=\{a,b,c,d,e\}\) be an independent
set, and let \(E(G)=\{xy,xa,xb,xc,yd,ye\}\). This split graph has a partition
tree whose central bag is \(X_0=\{x,y\}\), with one singleton leaf bag for
each vertex of \(S\). We root the tree at \(X_0\).

For every singleton leaf bag, the local table contains the empty state and its
pending state, both with value \(0\). At \(X_0\), selecting \(xy\) creates the
two oriented pair states \(C_{x\mid y}\) and \(C_{y\mid x}\), each with value
\(1\). Starting with \(C_{x\mid y}\), merging the pending states of
\(a,b,c\) successively forces \(xa,xb,xc\). Each representative remains a
star centred at \(x\), so the state stays \(C_{x\mid y}\), and its value
becomes \(4\).

The pending state of \(d\) would force \(yd\). The representative would then
give the leaf \(y\) a second selected neighbour, so the merge is rejected. The
same holds for \(e\). Thus, the algorithm returns
\(D=\{xy,xa,xb,xc\}\), an edge open packing of size \(4\).

This solution is optimal. If both \(x\) and \(y\) are endpoints of selected
edges, then \(xy\) is selected, and one of \(x,y\) is the centre while the
other is a leaf. A star centred at \(x\) has at most the four edges
\(xy,xa,xb,xc\), while a star centred at \(y\) has at most the three edges
\(xy,yd,ye\). If only one of \(x,y\) is an endpoint, at most three edges can
be selected. Therefore, \(\rho_e^o(G)=4\). Notice that
\(x\) and \(y\) have disjoint neighbourhoods in \(S\), which is why all three
neighbours of \(x\) may accompany the clique leaf \(y\) in this example.

\section{Conclusion}
\label{sec:conclusion}

We studied the \textsc{Maximum Edge Open Packing} problem on permutation
graphs, interval graphs, and well-partitioned chordal graphs. For permutation
and interval graphs, the oriented star-conflict framework converts the problem
into a maximum-clique computation in a transitively oriented compatibility
graph. This yields \(O(n^2+m^2)\)-time algorithms when the corresponding
representations are given. In particular, the interval-graph result extends
the previously known tractability for proper interval graphs.

For well-partitioned chordal graphs, we developed a dynamic program over the
partition tree. Explicit boundary roles and pending states preserve both
orientations of a boundary \(K_2\)-component, while constant-size
representatives make every compatibility test local. The resulting algorithm
computes the optimum value in \(O(n^4)\) time; an
optimal packing can be reconstructed in the same time.
Since split graphs are well-partitioned chordal, this extends the known
tractability result for split graphs.

These results give further partial progress on polynomial-time solvability for
chordal graphs of unbounded clique number. Natural questions include improving
the \(O(n^4)\) worst-case bounds, identifying additional graph classes whose
oriented star-conflict graphs have exploitable structure, and extending the
partition-tree states to broader clique-based decompositions.

\bibliographystyle{plain}
\bibliography{EOP_Bib_per}

\end{document}